\newcommand{\proba}{\mathbb{P}} 
\newcommand{\indic}{\mathds{1}} 
\newcommand{\E}{\mathbb{E}} 
\theoremstyle{plain}
\newtheorem{thm}{Theorem}
\newtheorem{pro}{Proposition}
\newtheorem{lem}{Lemma}
\providecommand{\keywords}[1]{\textbf{\textit{Keywords --}} #1}
\title{A statistical test of market efficiency \\ based on information theory}
\author{Xavier Brouty$^{\text{a}}$, Matthieu Garcin$^{\text{b,}}$\thanks{Corresponding author: matthieu.garcin@m4x.org. \newline $^{\text{a}}$ ESILV, 92916 Paris La Défense, France. \newline $^{\text{b}}$ Léonard de Vinci Pôle Universitaire, Research center, 92916 Paris La Défense, France. \newline Acknowledgments: We thank the participants of the 2022 Mathematical and statistical methods for actuarial sciences and finance conference, Salerno, 2022 Workshop on empirical modelling of financial market participants, Paris-Saclay, 2022 MaxEnt conference, Paris, and 2022 online Econophysics colloquium, for useful comments.}} 
\date{\today}
\begin{document}

\maketitle

\begin{abstract}
We determine the amount of information contained in a time series of price returns at a given time scale, by using a widespread tool of the information theory, namely the Shannon entropy, applied to a symbolic representation of this time series. By deriving the exact and the asymptotic distribution of this market information indicator in the case where the efficient market hypothesis holds, we develop a statistical test of market efficiency. We apply it to a real dataset of stock indices, single stock, and cryptocurrency, for which we are able to determine at each date whether the efficient market hypothesis is to be rejected, with respect to a given confidence level.
\end{abstract}

\keywords{market efficiency, information theory, Shannon entropy}



\section{Introduction}

The efficient market hypothesis (EMH) is a cornerstone of financial theory. According to Eugene Fama, ``\textit{a market in which prices always fully reflect available information is called efficient}''~\cite{Fama}. One can in fact distinguish different natures of market efficiency, depending on how one defines the information set. The weak form of market efficiency considers that current prices reflect all the information contained in past prices. The semi-strong form includes, in addition, public information such as announcements of annual earnings or stock splits. The strong form also includes private information for some investors. 

In this work, we are interested in the weak-form efficiency. Our purpose is to determine whether this efficiency is a realistic assumption and therefore whether arbitrages may exist. In other words, we want to answer the question: can one take advantage of past prices to predict future evolutions of prices? However, one must distinguish two kinds of arbitrages. Pure arbitrage leads to a certain gain. Statistical arbitrage leads to an uncertain output which, in average only, is a gain. Since pure arbitrages are in practice almost instantaneously eliminated by the market, we are more interested here in statistical arbitrages. In this perspective, we define the market efficiency as the absence of statistical arbitrages with predictions based on past prices, that is as the characterization of prices by martingales. Therefore, determining whether the market is efficient or not, according to this definition, is of major importance both for quantitative portfolio managers, to help them define a universe of predictable assets to trade, and for market makers, to help them adjust their prices and avoid informed trades.

The financial literature puts forward several solutions to answer this question of the relevance of the EMH. The Hurst exponent is a widespread statistic for this purpose. As a parameter of a specific model, namely the fractional Brownian motion (fBm), it is related to the covariance between various price increments~\cite{MvN}. Therefore, given a particular value of the Hurst exponent, building predictions with this model is possible~\cite{NP,Garcin2017} and statistical arbitrages naturally follow~\cite{GMR,GNR,GarcinForecast}.

However, the fBm is not always the most realistic specification for depicting time series of log-prices. One may thus be inclined to use some extensions of the fBm, for example getting rid of the Gaussian distribution~\cite{ST,WBMW,GarcinMPRE,AG}, using time-varying parameters in a deterministic~\cite{BJR,PLV,ALV,Garcin2017,BP} or stochastic way~\cite{AT,BPP,GarcinMPRE}, or transforming the fBm in a stationary process~\cite{FBA,GarcinLamperti,GarcinEstimLamp}. All these extensions have fractal or multifractal properties, quantified by a Hurst exponent, whose interpretation is thus different from the fBm case and may not be related to the autocorrelation of the process. In this case, the Hurst exponent is not a relevant indicator of market efficiency and one therefore needs a statistic which is less related to a specific model. 

Information theory proposes some model-free alternatives to the Hurst exponent. In particular, a rich literature in econophysics applies concepts from this field to the quantification of the complexity in financial time series~\cite{SSS,BSS}. The most widespread measure of complexity is Shannon entropy which directly derives from a probability distribution, which can for example be the distribution of prices~\cite{SLOS,LBA} or of singular values of the matrix of lagged subsequences of prices~\cite{EPR}. One can also cite the approximate entropy which measures the complexity of dynamical systems~\cite{Pincus}, with applications in finance~\cite{PK,KV}. But one cannot easily relate the complexity of all these distributions to the notion of market efficiency. In this perspective, the complexity of two particular probability distributions is of greater interest. First, following the method proposed by Bandt and Pompe~\cite{BPompe}, the permutation entropy focuses on the distribution of the ordinal patterns of a time series of prices~\cite{ZTS,ZBB,BFF}. Second, Risso's method studies the discrete distribution of the sign of successive price returns~\cite{Risso,MBM,Ducournau}. The two approaches are close to each other in that they both consider that information resides in the order in which price increases and decreases occur.


In this literature on the application of information theory to measuring market efficiency, it is considered that the entropy is a gradual indicator of efficiency. The underlying idea is that the market is more or less efficient. We consider instead that the question about the relevance of the EMH is summarized into the following: can one predict future evolutions of prices with a ratio of winning bets significantly higher than $50\%$? Therefore, the statistical significance of entropy-based market efficiency indicators is crucial for their interpretation, beyond their gradual aspect. 

Focusing on Risso's method, the purpose of this work is to provide a statistical test of market efficiency. We thus compare the entropy of the market with the entropy of an ideal efficient market, the difference of entropies corresponding to the market information. We show that the estimator of this information is subject to a statistical error because of a limited number of observations. We however give the moments of the distribution of this estimator as well as a more synthetic and practical formula for its asymptotic distribution, which follows a gamma law. We study on simulations the extent to which this asymptotic distribution applies and we build a statistical test of market efficiency. A short empirical application illustrates that we are able to reject the EMH for some time series at certain times. This is a stronger conclusion than the simple comparison of efficiency levels usually put forward by the traditional literature on these entropy-related tools.

The rest of the paper is organized as follows. Section~\ref{sec:marketinfo} presents the market information, which stems from Risso's method. Section~\ref{sec:stat} presents the statistical perspectives about this market information indicator, including the estimator, its asymptotic distribution, and the statistical test for the EMH. Section~\ref{sec:empirical} shows a short empirical study to various financial series. Section~\ref{sec:conclusion} concludes.

\section{Market information}\label{sec:marketinfo}

We consider, for a given financial asset, a time series of $n+1$ consecutive prices $P_0,...,P_n$. In the empirical study, Section~\ref{sec:empirical}, we work with daily prices, but we can also consider other frequencies. We transform the series of prices in a binary series $X_1,...,X_n$, which indicates whether the price has increased or decreased between two consecutive observations:
$$X_i=\indic_{\{P_i-P_{i-1}>0\}}.$$
This symbolic representation of a series of consecutive prices is consistent with Risso's approach~\cite{Risso}. For example the sequence $(X_1,X_2,X_3)=(0,1,1)$ represents a decrease followed by two daily price increases. For a given length $L<n$, $2^L$ sequences are possible. We order them, for example, with Gray's binary code. If $L=3$, the 8 possible sequences are 
$$(G^3_1,...,G^3_8)=((0,0,0), (0,0,1), (0,1,1), (0,1,0), (1,1,0), (1,0,0), (1,0,1), (1,1,1)).$$

The purpose of Shannon entropy is to determine the amount of uncertainty in a discrete probability distribution. In Risso's approach, one focuses on the distribution of the sequences of increase indicators, given a length $L$. We assume that the random series $(X_i)$ is stationary. The probability to draw a particular sequence of length $L$ is noted $p^L_i=\proba((X_{.},X_{.+1},...,X_{.+L-1})=G^L_i)$. The Shannon entropy of the discrete distribution $(p^L_i)_i$ is defined by:
$$H^L=-\sum_{i=1}^{2^L} p^L_i\log_2\left(p^L_i\right),$$
where we use the convention that $0\log_2(0)=0$ which is reasonable because $\lim\limits_{x\to 0} x\log_2(x)=0$. We note that the highest entropy corresponds to a uniform distribution: $\forall i\in\llbracket 1,2^L\rrbracket$, $p^L_i=2^{-L}$. 

In Risso's approach, a normalized version of $H^L$, thus belonging to $[0,1]$, plays the role of efficiency indicator. The larger this indicator, the more efficient the market. This leads to a gradual interpretation of market efficiency. On the contrary, the approach we put forward is based on a binary interpretation of efficiency: the market is efficient or it isn't. We therefore propose an indicator which is slightly different from that of Risso. First, we don't need to normalize Shannon entropy. Second, instead of considering the entropy, we use the amount of information contained in the distribution of sequences of increase indicators $X_i$, with respect to what would be this distribution according to the EMH. We thus need to clearly define what are the distributions consistent with the EMH. In this perspective, we find more natural to base our analysis on conditional distributions instead of on the non-conditional distributions used in Risso's approach. Indeed, in our symbolic framework, the statistical arbitrages appearing in an inefficient market take advantage of a difference of probability between the two possible values of the future increase indicator $X_{L+1}$, conditionally on observed past indicators $(X_1,...,X_{L})$.

Following this conditional approach, we decompose a sequence $(X_1,...,X_{L+1})$ observed in time $L$ in two parts. The $L$ first elements constitute a prefix sequence, corresponding to the last $L$ observed increase indicators at the current time. The suffix, $X_{L+1}$, corresponds to the next, unobserved and random increase indicator. The discrete probability of the prefix $(X_1,...,X_{L})=G^{L}_i$ is $p^{L}_i$, as already mentioned. Conditionally on this prefix, the distribution of the suffix $X_{L+1}$ is Bernoulli of a parameter noted $\pi^{L}_i\in[0,1]$. Of course, $\pi^{L}_i$ is only defined if $p^{L}_i\neq 0$. The full sequence $(X_1,...,X_{L+1})$ is thus equal to $(G^{L}_i,1)$ with probability $p^{L}_i\pi^{L}_i$ and $(G^{L}_i,0)$ with probability $p^{L}_i(1-\pi^{L}_i)$. We then get the Shannon entropy of the discrete distribution of this full sequence of length $L+1$:
\begin{equation}\label{eq:entropyPrefSuff}
H^{L+1}=-\sum_{i=1}^{2^{L}} \left(p^{L}_i\pi^{L}_i\log_2\left(p^{L}_i\pi^{L}_i\right) + p^{L}_i(1-\pi^{L}_i)\log_2\left(p^{L}_i(1-\pi^{L}_i)\right)\right).
\end{equation}

The EMH asserts that, conditionally on the prefix $(X_1,...,X_{L})$, the two possible values for the suffix, $X_{L+1}=1$ and $X_{L+1}=0$, have equal probability. The Shannon entropy $H^{L+1}_{\star}$ for a market following the EMH thus expresses as a particular case of equation~\eqref{eq:entropyPrefSuff}, with $\pi^{L}_i=1/2$: 
\begin{equation}\label{eq:Hstar}
H^{L+1}_{\star}=-\sum_{i=1}^{2^{L}} p^{L}_i\log_2\left(\frac{p^{L}_i}{2}\right)=1+H^{L}.
\end{equation}
Finally, we define our efficiency indicator, the market information, as the difference between the entropy consistent with the ideal EMH and the true entropy of the market:
\begin{equation}\label{eq:marketinfo}
I^{L+1}=H^{L+1}_{\star}-H^{L+1}.
\end{equation}
In Section~\ref{sec:stat}, we use the estimated value of this indicator to build a statistical test of market efficiency. This is possible because of the important following theorem, which shows that the value of $I^{L+1}$ discriminates efficient and inefficient markets.

\begin{thm}\label{th:positive}
For $L\in\mathbb N^{\star}$, we have $I^{L+1}\geq 0$. Moreover,
$$I^{L+1}=0 \Longleftrightarrow \forall i\in\llbracket 1,2^L\rrbracket\setminus\left\{j| p^L_j=0\right\}, \pi^{L}_i=\frac{1}{2}.$$
\end{thm}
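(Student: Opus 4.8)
The plan is to compute $I^{L+1}$ explicitly and reduce the statement to an elementary property of the binary entropy function. First I would expand the logarithms inside~\eqref{eq:entropyPrefSuff} via $\log_2(ab)=\log_2 a+\log_2 b$, which is legitimate wherever the arguments are strictly positive and harmless elsewhere thanks to the convention $0\log_2 0=0$. Grouping the resulting terms, those carrying the factor $p^L_i\log_2(p^L_i)$ on one side and those in which $p^L_i$ multiplies a Bernoulli contribution on the other, I expect to obtain the clean decomposition
$$H^{L+1}=H^L+\sum_{i=1}^{2^L}p^L_i\,h(\pi^L_i),$$
where $h(\pi)=-\pi\log_2(\pi)-(1-\pi)\log_2(1-\pi)$ is the binary entropy of a Bernoulli parameter $\pi$. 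The bookkeeping here is the only delicate point: when $p^L_i=0$ the whole term vanishes and $\pi^L_i$ need not be defined, while when $\pi^L_i\in\{0,1\}$ one has $h(\pi^L_i)=0$, so the identity holds uniformly once the convention is invoked.

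Next I would substitute this decomposition together with the identity~\eqref{eq:Hstar} into the definition~\eqref{eq:marketinfo}. The two copies of $H^L$ cancel, and using $\sum_{i=1}^{2^L}p^L_i=1$ one is left with
$$I^{L+1}=1-\sum_{i=1}^{2^L}p^L_i\,h(\pi^L_i)=\sum_{i=1}^{2^L}p^L_i\bigl(1-h(\pi^L_i)\bigr).$$
This rewriting is the heart of the argument: it exhibits $I^{L+1}$ as a $(p^L_i)$-weighted sum of the non-negative quantities $1-h(\pi^L_i)$, from which both the inequality and its equality case will follow immediately.

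It then remains to establish the single analytic fact that $0\le h(\pi)\le 1$ on $[0,1]$, with $h(\pi)=1$ if and only if $\pi=1/2$. I would prove this by noting that $h$ is strictly concave with derivative $h'(\pi)=\log_2\bigl((1-\pi)/\pi\bigr)$, so it attains a unique maximum at $\pi=1/2$, where $h(1/2)=1$. Consequently every summand $p^L_i\bigl(1-h(\pi^L_i)\bigr)$ is non-negative, giving $I^{L+1}\ge 0$; and $I^{L+1}=0$ forces each summand to vanish, i.e.\ for every $i$ either $p^L_i=0$ or $h(\pi^L_i)=1$, the latter being equivalent to $\pi^L_i=1/2$ by the strict-maximum property. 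Restricting to the indices with $p^L_i\neq 0$ yields exactly the stated characterization. The computation itself is routine; the main obstacle to watch is the careful handling of the degenerate cases $p^L_i=0$ and $\pi^L_i\in\{0,1\}$, where the convention $0\log_2 0=0$ must be applied so that the log-splitting and the cancellation of the $H^L$ terms remain valid.
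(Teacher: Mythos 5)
Your proposal is correct and follows essentially the same route as the paper: the paper's proof also decomposes $I^{L+1}$ into a sum over prefixes of $f(p^L_i,\pi^L_i)$ with $f(x,y)=xy\log_2(xy)+x(1-y)\log_2(x(1-y))-x\log_2(x/2)$, which is exactly your $x\bigl(1-h(y)\bigr)$, and establishes nonnegativity with unique zero at $y=1/2$ via the same derivative $\log_2\bigl(y/(1-y)\bigr)$ that you use for $h'$. Your phrasing through the binary entropy $h$ is a slightly cleaner packaging of the identical argument, including the same handling of the degenerate cases $p^L_i=0$ via the convention $0\log_2 0=0$.
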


The proof is postponed in Appendix~\ref{sec:proofpositive}.

Theorem~\ref{th:positive} states that if the market follows the EMH, then $I^{L+1}$ is equal to 0. If the market does not follow the EMH by leading to some $\pi^{L}_i\neq 1/2$, then $I^{L+1}>0$. Nevertheless, we are aware that some time series of prices may have all their $\pi^{L}_i$ equal to $1/2$, while they are not consistent with the EMH. The reason for this is that we have summarized a price increment in $X_i$, which can take only two values, 0 or 1. Some information contained in the time series is lost with this two-state symbolic approach. One could thus be tempted to generalize Risso's approach to a greater number of states or even to a continuum of states~\cite{SLOS,LBA}. We leave this possible extension to further research and focus on the classical two-state approach, which we believe to be enlightening. We however underline below some challenges regarding the extension to more than two states. 

First, we recall that the aim of our work is to build a statistical test of market efficiency, in which we determine whether we can reject the null hypothesis corresponding to the EMH. If we reject the EMH in our simplified two-state approach with a given confidence (typically $99\%$, e.g.), we know that we are able to reject the EMH in a more realistic framework with an even greater confidence. In other words, using only two states tends to diminish the power of the statistical test but should not alter its significance level. Second, if we include more states than only two, it is more difficult to link entropy to the notion of market efficiency. Indeed, with only two states, it is clear that the efficient market corresponds to the uniform conditional distribution of the suffix ($\pi^L_i=1/2$). With a greater number of states, the definition of the set of distributions consistent with the EMH is not as trivial and Shannon entropy may not be an appropriate quantity for discriminating efficient and inefficient markets. In particular, outside the two-state framework, a same level of entropy may correspond to several distributions, among which some but not all are consistent with the EMH. Third, given a prefix sequence, increasing the number of states will decrease the number of observations of each suffix and thus increase the variance of the estimator of the probability of each state. In other words, the statistical significance of the estimated market information may be reduced.

\section{Statistical perspective}\label{sec:stat}

We now focus on the statistical properties of the concept of market information defined in Section~\ref{sec:marketinfo}. We begin by introducing a simple estimator along with some of its properties. Then we determine the asymptotic distribution of this estimator and we use these elements to build a statistical test of market efficiency.

\subsection{Estimator of the market information}\label{sec:estimator}

The definition of the market information, as displayed in equation~\eqref{eq:marketinfo}, relies on unobserved probabilities $p^L_i$ and $\pi^L_i$. Replacing these theoretical probabilities by their empirical version leads to an estimator $\widehat{I}^{L+1}$ of the market information $I^{L+1}$. We note that it is the value of $\pi^L_i$, not the value of $p^L_i$, which makes it possible to conclude about the relevance of the EMH. We will thus be particularly interested in the behaviour of the market information with respect to the estimated conditional probabilities $\pi^L_i$. Consequently, most of the results below will be stated conditionally on the probabilities $p^L_i$.

A first remark regarding $\widehat{I}^{L+1}$ is that this estimate also follows Theorem~\ref{th:positive} in which probabilities are replaced by empirical probabilities. In other words, we still have $\widehat{I}^{L+1}\geq 0$. But if the EMH holds, we may have $\widehat{I}^{L+1}$ \textit{slightly} higher than 0 because, due to the statistical error, the estimated probabilities $\widehat{\pi}^L_i$ may be different from $\pi^L_i$, which are $1/2$ in this case. Building a statistical test is thus mandatory for answering the question of the efficiency of the market. It will be the purpose of Section~\ref{sec:test}.

Considering that the theoretical value of the market information is 0, according to the EMH, and that the estimator $\widehat{I}^{L+1}$ is nonnegative and may be different from zero even when the EMH holds, we conclude that $\widehat{I}^{L+1}$ is biased. We will see however in Section~\ref{sec:asympt} that the bias tends to zero when $n$ tends to infinity, at least in the EMH case.

We provide some insight on the distribution of the estimator $\widehat{I}^{L+1}$ with its exact moment-generating function, displayed in the following proposition. We focus on the case where the EMH holds, that is for $\pi^L_i=1/2$ for all $i$, but for an empirical market information based on $\widehat{\pi}^L_i$ instead of $\pi^L_i$. The provided expression also assumes that we have access to the true probabilities of the prefixes: $\widehat{p}^L_i=p^L_i$.

\begin{pro}\label{pro:mgf_exact}
For $L\in\mathbb N^{\star}$, the moment-generating function of $\widehat{I}^{L+1}$, conditionally on the event $\mathcal E= \{\forall i\in\llbracket 1,2^L\rrbracket, \widehat{p}^L_i=p^L_i,\pi^L_i=1/2\}$, is:
$$M_{\widehat{I}^{L+1}}:t\mapsto\E\left[\left.e^{t\widehat{I}^{L+1}}\right|\mathcal E\right]=e^t\prod_{i=1}^{2^L} \sum_{j=0}^{n_i} C^L_{i,j}(t),$$
for values of $t$ for which this quantity is defined and where 
$$C^L_{i,j}(t)=\binom{n_i}{j}\frac{1}{2^{n_i}} \left(\frac{j}{n_i}\right)^{t p^L_i j/n_i \ln(2)} \left(1-\frac{j}{n_i}\right)^{t p^L_i (1-j/n_i)/ \ln(2)}.$$
\end{pro}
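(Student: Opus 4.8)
The plan is to compute the moment-generating function directly from the definition of the estimator under the stated conditioning, by exploiting the independence of the suffix counts across distinct prefixes.

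First I would write down the estimator $\widehat{I}^{L+1}$ explicitly. From equations~\eqref{eq:Hstar} and~\eqref{eq:marketinfo} we have $I^{L+1}=1+H^L-H^{L+1}$, and substituting the prefix/suffix decomposition~\eqref{eq:entropyPrefSuff} with the true prefix probabilities $p^L_i$ but estimated conditional probabilities $\widehat{\pi}^L_i$, the terms involving $p^L_i\log_2(p^L_i)$ cancel against $H^L$, leaving
\begin{equation}\label{eq:Ihatdecomp}
\widehat{I}^{L+1}=1-\sum_{i=1}^{2^L} p^L_i\left(-\widehat{\pi}^L_i\log_2\widehat{\pi}^L_i-(1-\widehat{\pi}^L_i)\log_2(1-\widehat{\pi}^L_i)\right).
\end{equation}
Here $\widehat{\pi}^L_i=j/n_i$, where $n_i$ is the number of times the prefix $G^L_i$ is observed and $j$ is the number of those occurrences followed by a price increase. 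The key probabilistic input is that, conditionally on $\mathcal E$ (hence $\pi^L_i=1/2$ and fixed prefix counts $n_i$), the suffix count for prefix $i$ is Binomial$(n_i,1/2)$, and these counts for different $i$ are independent because they concern disjoint conditional experiments.

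Next I would factor the moment-generating function. The leading constant $1$ in~\eqref{eq:Ihatdecomp} pulls out the factor $e^t$. Writing $\widehat{I}^{L+1}=1+\sum_i Y_i$ where $Y_i$ is the $i$-th summand (a function of the single Binomial variable $j$), independence across $i$ gives
$$\E\left[\left.e^{t\widehat{I}^{L+1}}\right|\mathcal E\right]=e^t\prod_{i=1}^{2^L}\E\left[\left.e^{tY_i}\right|\mathcal E\right].$$
It then remains to evaluate each factor as a sum over $j\in\{0,\dots,n_i\}$ weighted by the Binomial probability $\binom{n_i}{j}2^{-n_i}$. For each $j$ I would convert the base-2 logarithms in $Y_i$ to natural logarithms via $\log_2(x)=\ln(x)/\ln(2)$, so that $tY_i=t\,p^L_i\,[\,(j/n_i)\ln(j/n_i)+(1-j/n_i)\ln(1-j/n_i)\,]/\ln(2)$, and then exponentiate, using $e^{a\ln x}=x^a$ to recognize the two power terms $(j/n_i)^{tp^L_i(j/n_i)/\ln(2)}$ and $(1-j/n_i)^{tp^L_i(1-j/n_i)/\ln(2)}$. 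This reproduces exactly $C^L_{i,j}(t)$ and completes the identification.

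The main obstacle is not algebraic but careful bookkeeping: one must justify the independence of the suffix counts across prefixes under the conditioning (so that the expectation factorizes), and handle the boundary cases $j=0$ and $j=n_i$ where a factor of the form $0^0$ appears. These are resolved by the stated convention $0\log_2(0)=0$, which makes the corresponding summand $Y_i$ contribution vanish and the power term equal to $1$, consistent with $C^L_{i,j}(t)$ evaluating correctly at the endpoints. A minor additional point is to note that the expression is only asserted for values of $t$ for which the finite sum-product is well defined, so no convergence issues arise beyond this remark.
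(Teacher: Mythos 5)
Your proposal is correct and follows essentially the same route as the paper's proof: condition on $\mathcal E$ so that $n_i\widehat{\pi}^L_i\sim\mathcal B(n_i,1/2)$, factor the expectation over the $2^L$ prefixes, and sum over the binomial outcomes to recognize $C^L_{i,j}(t)$. The only (cosmetic) difference is that you cancel the $p^L_i\log_2 p^L_i$ terms against $H^L$ at the level of the estimator before taking expectations, whereas the paper extracts the factors $e^{tp^L_i\log_2 p^L_i}$ after computing each expectation and then invokes $H^{L+1}_{\star}=1+H^L$ to obtain the leading $e^t$.
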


The proof of this proposition is postponed in Appendix~\ref{sec:proofmgf_exact}.

We can then deduce from Proposition~\ref{pro:mgf_exact} the moments of the estimated market information, like in the following proposition, whose proof is in Appendix~\ref{sec:proofmoment}.

\begin{pro}\label{pro:moment}
For $L\in\mathbb N^{\star}$ and $r\in\mathbb N$, the conditional moment of order $r$ of the estimator of market information is
$$\E\left[\left.\left(\widehat{I}^{L+1}\right)^r\right|\mathcal E\right] = \sum_{m=0}^r \binom{r}{m} \sum_{j_1=0}^{n_1} \ldots\sum_{j_{2^L}=0}^{n_{2^L}}\alpha_{j_1,...,j_{2^L}}^{m}\prod_{i=1}^{2^L} \binom{n_i}{j_i}\frac{1}{2^{n_i}} ,$$
where 
$$\alpha_{j_1,...,j_{2^L}}=\sum_{k=1}^{2^L} p^L_k\left(\frac{j_k}{n_k}\log_2\left(\frac{j_k}{n_k}\right)+\left(1-\frac{j_k}{n_k}\right)\log_2\left(1-\frac{j_k}{n_k}\right)\right).$$
\end{pro}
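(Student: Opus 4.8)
The plan is to derive the $r$-th moment directly from the moment-generating function in Proposition~\ref{pro:mgf_exact}, but I expect the cleaner route to be a direct computation from the definition of $\widehat{I}^{L+1}$, since the stated formula has an explicit probabilistic structure (a sum over binomial outcomes) that mirrors the underlying randomness rather than the analytic structure of the MGF. First I would recall that, conditionally on $\mathcal E$, each $\widehat{\pi}^L_i$ is the empirical frequency $j_i/n_i$ of up-moves among the $n_i$ suffixes following prefix $G^L_i$, where each $j_i$ is an independent $\mathrm{Binomial}(n_i,1/2)$ random variable (independence across $i$ holding by the conditional structure, the $1/2$ by the EMH assumption $\pi^L_i=1/2$). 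I would then write $\widehat{I}^{L+1}=H^{L+1}_{\star}-\widehat{H}^{L+1}$ and simplify: using equation~\eqref{eq:Hstar} and the prefix/suffix decomposition~\eqref{eq:entropyPrefSuff} with $\widehat{\pi}^L_i=j_i/n_i$, the $p^L_i\log_2(p^L_i)$ terms cancel against $H^L$, leaving
\begin{equation*}
\widehat{I}^{L+1}=1+\sum_{k=1}^{2^L} p^L_k\left(\frac{j_k}{n_k}\log_2\left(\frac{j_k}{n_k}\right)+\left(1-\frac{j_k}{n_k}\right)\log_2\left(1-\frac{j_k}{n_k}\right)\right)=1+\alpha_{j_1,\ldots,j_{2^L}}.
\end{equation*}
This identification of $\widehat{I}^{L+1}$ with $1+\alpha_{j_1,\ldots,j_{2^L}}$ is the structural heart of the argument and explains why $\alpha$ appears in the statement.

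Next I would take the $r$-th power and expand with the binomial theorem, $(1+\alpha)^r=\sum_{m=0}^r\binom{r}{m}\alpha^m$, which produces the outer sum $\sum_{m=0}^r\binom{r}{m}$ and the factor $\alpha^m_{j_1,\ldots,j_{2^L}}$ appearing in the claimed formula. Taking the conditional expectation then amounts to averaging over the joint law of $(j_1,\ldots,j_{2^L})$: since these are independent binomials with parameter $1/2$, the joint probability mass function factorizes as $\prod_{i=1}^{2^L}\binom{n_i}{j_i}2^{-n_i}$, and the expectation becomes the iterated sum $\sum_{j_1=0}^{n_1}\cdots\sum_{j_{2^L}=0}^{n_{2^L}}$ against this product measure. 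Matching these pieces reproduces the stated expression exactly.

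The main obstacle I anticipate is bookkeeping and justification rather than any deep difficulty: I would need to confirm the convention $0\log_2 0=0$ is respected at the boundary outcomes $j_i\in\{0,n_i\}$ (guaranteed by the convention already adopted in the paper), and I would need to be careful that $\widehat{\pi}^L_i=j_i/n_i$ is well-defined, which requires $n_i\geq 1$ for every prefix with $p^L_i\neq 0$; under the conditioning $\widehat{p}^L_i=p^L_i$ one should note that prefixes with $p^L_i=0$ contribute zero to $\alpha$ and can be dropped from the products and sums harmlessly. A secondary point worth verifying explicitly is the cancellation in the first step: one must check that $H^{L+1}_{\star}=1+H^L$ (equation~\eqref{eq:Hstar}) combines with the expansion of $\widehat{H}^{L+1}$ so that precisely the $p^L_i\log_2 p^L_i$ pieces disappear, leaving only the conditional-entropy terms collected in $\alpha$. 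Once this algebraic simplification and the independence/factorization are in hand, the result follows by a routine rearrangement, and I would present it as a short direct computation rather than attempting to differentiate the MGF $r$ times, which would be considerably more cumbersome.
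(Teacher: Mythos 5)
Your proof is correct, and it takes a genuinely different (and more elementary) route than the paper. The paper obtains the $r$-th moment as $M_{\widehat{I}^{L+1}}^{(r)}(0)$: it expands the product in Proposition~\ref{pro:mgf_exact} into the nested sum $e^{t}\sum_{j_1}\cdots\sum_{j_{2^L}}\prod_i C^L_{i,j_i}(t)$, applies the Leibniz rule to the factor $e^t$ times the rest, and uses the key identity $\bigl(\prod_i C^L_{i,j_i}(t)\bigr)^{(m)}=\alpha_{j_1,\ldots,j_{2^L}}^m\prod_i C^L_{i,j_i}(t)$ before evaluating at $t=0$. Your direct computation makes explicit what that identity encodes implicitly: conditionally on the binomial outcomes $(j_1,\ldots,j_{2^L})$ and on $\mathcal E$, the estimator is the deterministic quantity $\widehat{I}^{L+1}=1+\alpha_{j_1,\ldots,j_{2^L}}$ (your cancellation of the $p^L_i\log_2 p^L_i$ terms against $H^L$ via equation~\eqref{eq:Hstar} is exactly right, and it is the same cancellation the paper performs inside the proof of Proposition~\ref{pro:mgf_exact}). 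From there the binomial theorem produces the $\binom{r}{m}\alpha^m$ structure that the paper instead gets from Leibniz applied to $e^t$, and averaging against the product binomial law gives the nested sum. The two arguments are therefore close cousins, but yours avoids differentiating the MGF altogether and makes the probabilistic meaning of $\alpha$ transparent; the paper's version buys consistency with Proposition~\ref{pro:mgf_exact} by reusing its machinery. Your reliance on independence of the $j_i$ across prefixes is not an additional assumption: it is already built into the product form of equation~\eqref{eq:MGF1}, so you are working under exactly the same hypotheses as the paper. Your handling of the boundary cases $j_i\in\{0,n_i\}$ via the convention $0\log_2 0=0$ and of prefixes with $p^L_i=0$ is also consistent with the paper's conventions.
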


The formula with the nested sum in Proposition~\ref{pro:moment} leads to a slow computation of moments, in particular when $L$ is big. We can however also provide a simpler expression of the moments of low order, for example
$$\begin{array}{ccl}
\E\left[\left.\widehat{I}^{L+1}\right|\mathcal E\right] & = & \sum_{i=1}^{2^L} \left[-p^L_i\log_2\left(\frac{p^L_i}{2}\right) \right. \\
 & & \left.+p^L_i\sum_{j_i=0}^{n_i} \binom{n_i}{j_i}\frac{1}{2^{n_i}} \left(\frac{j_i}{n_i}\log_2\left(p^L_i\frac{j_i}{n_i}\right)+\left(1-\frac{j_i}{n_i}\right)\log_2\left(p^L_i\left(1-\frac{j_i}{n_i}\right)\right)\right)\right] \\
 & = & 1+\sum_{i=1}^{2^L} p^L_i2^{1-n_i} \sum_{j_i=0}^{n_i} \binom{n_i}{j_i} \frac{j_i}{n_i}\log_2\left(\frac{j_i}{n_i}\right).
\end{array}$$
This expression relies on the direct calculation of the first moment, without using the moment-generating function. Nevertheless, the moment-generating function or any similar transform is useful to fully describe the probability distribution of the estimator of market efficiency. A more condensed asymptotic expression of such a transform would thus be helpful in the perspective of a practical application. It is the purpose of the next section.

\subsection{Asymptotic analysis}

We decompose the study of the asymptotic distribution of the empirical market information in two steps: first we focus on the summands, then on the market information itself.

\subsubsection{Summands of the market information}

We introduce the function $g_j$, defined, for $(t,x)\in\mathbb R\times(0,1)$ and a given $j\in\llbracket 1,2^L\rrbracket$, by:
\begin{equation}\label{eq:gj}
g_j(t,x)=\exp\left(i t\left[p^L_j x\log_2(p^L_j x)+p^L_j (1-x)\log_2(p^L_j (1-x))-p^L_j\log_2\left(\frac{p^L_j}{2}\right)\right]\right),
\end{equation}
where $i$ is the imaginary unit. As related to a summand of $\widehat I^{L+1}$, this functions appears in the characteristic function of the market information:
\begin{equation}\label{eq:caracgj}
\varphi_{\widehat I^{L+1}}:t\in\mathbb R\mapsto\E\left[\left. e^{it\widehat I^{L+1}}\right|\mathcal E\right] = \prod_{j=1}^{2^L} \E\left[g_j(t,\widehat \pi_j)\right].
\end{equation}
Before giving an asymptotic expression for the characteristic function, we provide some useful properties on the function $g_j$, beginning by an expression of its derivatives.

\begin{pro}\label{pro:deriv_gj}
For $k\geq 1$ and $j\in\llbracket 1,2^L\rrbracket$, the $k$-th derivative with respect to $x\in(0,1)$ of the function $g_j$ defined in equation~\eqref{eq:gj}, is
\begin{equation}\label{eq:deriv_gj}
\partial_x^k g_j(t,x)=g_j(t,x)\sum_{l=1}^{k}\left(\frac{itp^L_j}{\ln(2)}\right)^l B_{k,l}\left(\lambda(x),\frac{d}{dx}\lambda(x),...,\frac{d^{k-l}}{dx^{k-l}}\lambda(x)\right),
\end{equation}
where $B_{k,l}$ is a Bell polynomial and $\lambda(x)=\ln(x)-\ln(1-x)$. In particular, when $x=1/2$, we have
\begin{equation}\label{eq:deriv_gj_1/2}
\partial_x^k g_j\left(t,\frac{1}{2}\right)=2^k\sum_{l=1}^{k}\left(\frac{itp^L_j}{\ln(2)}\right)^l B_{k,l}\left(0,0!,0,2!,0,4!,0,...\right)
\end{equation}
and, in the particular case where $k$ is odd, $\partial_x^k g_j(t,1/2)=0$.
\end{pro}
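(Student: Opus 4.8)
The plan is to prove the derivative formula by recognizing that $g_j(t,x)$ has the structure of a composition $\exp(\psi(x))$ for a suitable function $\psi$, which is exactly the setting where the Bell polynomial version of Fa\`a di Bruno's formula applies.

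\textbf{Identifying the inner function.} First I would isolate the argument of the exponential in equation~\eqref{eq:gj}. Writing $g_j(t,x)=\exp(\psi_j(t,x))$ where
\begin{equation*}
\psi_j(t,x)=it\left[p^L_j x\log_2(p^L_j x)+p^L_j(1-x)\log_2(p^L_j(1-x))-p^L_j\log_2\left(\frac{p^L_j}{2}\right)\right],
\end{equation*}
the task reduces to computing $\partial_x^k\exp(\psi_j)$. By Fa\`a di Bruno's formula in its Bell-polynomial form, $\partial_x^k\exp(\psi_j)=\exp(\psi_j)\sum_{l=1}^k B_{k,l}\bigl(\partial_x\psi_j,\partial_x^2\psi_j,\dots,\partial_x^{k-l+1}\psi_j\bigr)$, since the outer function $\exp$ has all derivatives equal to itself. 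The crux is therefore to understand the successive $x$-derivatives of $\psi_j$.

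\textbf{Derivatives of the inner function.} Next I would compute $\partial_x\psi_j$. Converting $\log_2$ to $\ln$ and differentiating, the derivative of $x\ln(p^L_j x)+(1-x)\ln(p^L_j(1-x))$ with respect to $x$ simplifies (the terms arising from differentiating inside the logarithms cancel) to $\ln(p^L_j x)-\ln(p^L_j(1-x))=\ln(x)-\ln(1-x)=\lambda(x)$. Hence $\partial_x\psi_j=\frac{itp^L_j}{\ln(2)}\lambda(x)$, so that for every $m\geq 1$, $\partial_x^m\psi_j=\frac{itp^L_j}{\ln(2)}\,\lambda^{(m-1)}(x)$. Substituting these into the Bell-polynomial expression and using the homogeneity property $B_{k,l}(c a_1,\dots,c a_{k-l+1})=c^l B_{k,l}(a_1,\dots,a_{k-l+1})$, the common factor $\frac{itp^L_j}{\ln(2)}$ pulls out with exponent $l$, and the arguments become $\lambda(x),\lambda'(x),\dots,\lambda^{(k-l)}(x)$. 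This is precisely equation~\eqref{eq:deriv_gj}.

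\textbf{Evaluation at $x=1/2$ and the parity claim.} Finally I would specialize to $x=1/2$. Since $\lambda(1/2)=\ln(1/2)-\ln(1/2)=0$ and the derivatives of $\lambda$ are the derivatives of $\ln(x)-\ln(1-x)$, a short computation gives $\lambda^{(m)}(1/2)=(m-1)!\bigl((-1)^{m-1}-(-1)^{m}(-1)^{m-1}\bigr)$ evaluated at $1/2$; more cleanly, $\lambda^{(m)}(x)=(-1)^{m-1}(m-1)!\,x^{-m}-(m-1)!(1-x)^{-m}$, so at $x=1/2$ one finds $\lambda^{(m)}(1/2)=(m-1)!\,2^m\bigl((-1)^{m-1}-1\bigr)$, which vanishes for $m$ even and equals $-2\cdot(m-1)!\,2^m$ for $m$ odd. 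After the normalization implicit in writing the argument vector as $(0,0!,0,2!,0,4!,\dots)$, the even-indexed entries are zero and the odd-indexed entries are the factorials $0!,2!,4!,\dots$; pulling out the factor $2^k$ from the homogeneity of each $B_{k,l}$ in these rescaled arguments yields equation~\eqref{eq:deriv_gj_1/2}. The parity statement then follows from a standard property of Bell polynomials: when the first argument is $0$, $B_{k,l}(0,a_2,a_3,\dots)$ is nonzero only if each partition of $k$ into $l$ parts avoids parts of size $1$, and one checks that for $k$ odd every term in the sum over $l$ either forces an odd number of odd-sized parts larger than $1$ (whose contribution cancels) or multiplies a vanishing even-indexed argument, so the whole sum is $0$.

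\textbf{Main obstacle.} I expect the genuine difficulty to lie in the final parity claim rather than in the mechanical differentiation. Establishing that $\partial_x^k g_j(t,1/2)=0$ for odd $k$ requires an argument about which partitions of $k$ survive when the even-indexed arguments of the Bell polynomials vanish: a monomial in $B_{k,l}$ indexed by a partition of $k$ into parts $m_1,\dots$ contributes only if all parts are odd, and the sum of an odd number of odd parts is odd while $k$ is odd, so one must track carefully that the surviving monomials are exactly those where the number of parts has the correct parity. Rather than invoke a symmetry of $g_j$ about $x=1/2$ --- which would be the cleanest route, noting that the bracketed expression in $\psi_j$ is symmetric under $x\mapsto 1-x$ so $\psi_j$ and hence $g_j$ is an even function of $x-1/2$, forcing all odd derivatives at $1/2$ to vanish --- the Bell-polynomial bookkeeping is more delicate; I would therefore prefer to prove the parity statement via the symmetry $g_j(t,x)=g_j(t,1-x)$, which makes the vanishing of odd derivatives immediate.
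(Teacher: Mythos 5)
Your derivation of equation~\eqref{eq:deriv_gj} is exactly the paper's route: Fa\`a di Bruno applied to $g_j=\exp(\psi_j)$ with $\partial_x\psi_j=\frac{itp^L_j}{\ln(2)}\lambda(x)$, followed by the homogeneity $B_{k,l}(ca_1,\dots)=c^lB_{k,l}(a_1,\dots)$. For the odd-$k$ vanishing you take a genuinely different and cleaner route: the paper proves it combinatorially from the Bell conditions (any monomial with $j_m=0$ for all even $m$ has $\sum_m(m+1)j_m$ even, so for odd $k$ every monomial must contain a vanishing even-index derivative of $\lambda$), whereas you invoke the symmetry $g_j(t,x)=g_j(t,1-x)$, which forces all odd derivatives at $1/2$ to vanish immediately. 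That symmetry is stated in the paper (equation~\eqref{eq:sym_gj}) just after the proposition but is not used in its proof; your version is shorter and avoids the partition bookkeeping entirely, at the cost of not re-deriving the structural fact about which Bell monomials survive.

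That said, several intermediate computations in your third step are internally inconsistent and need repair before the write-up is correct. The derivative formula should read $\lambda^{(m)}(x)=(-1)^{m-1}(m-1)!\,x^{-m}+(m-1)!\,(1-x)^{-m}$ (plus sign, since $\frac{d^m}{dx^m}\bigl(-\ln(1-x)\bigr)=(m-1)!(1-x)^{-m}$), giving $\lambda^{(m)}(1/2)=(m-1)!\,2^m\bigl((-1)^{m-1}+1\bigr)$, which vanishes for $m$ even and equals $(m-1)!\,2^{m+1}$ (positive) for $m$ odd; your minus-sign version would make the expression vanish for $m$ odd, contradicting the conclusion you then state, and the spurious negative sign would propagate an extra $(-1)^l$ into equation~\eqref{eq:deriv_gj_1/2}. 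Likewise, in the vector $(0,0!,0,2!,0,4!,\dots)$ it is the odd-indexed entries that vanish and the even-indexed ones that carry the factorials, and in the partition argument the surviving monomials are those with all parts \emph{even} (a part of size $s$ carries the factor $\lambda^{(s-1)}(1/2)$, which vanishes for $s$ odd), whence their sum cannot be an odd $k$. None of these slips reflects a missing idea --- the strategy is sound and the symmetry argument rescues the parity claim regardless --- but as written the evaluation at $x=1/2$ does not yet yield equation~\eqref{eq:deriv_gj_1/2} with the correct signs.
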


The proof of this proposition is postponed in Appendix~\ref{sec:proofderiv_gj}.

It is also easy to see, from equation~\eqref{eq:gj}, that for all $y\in[0,1/2)$, $g_j(t,y+1/2)=g_j(t,-y+1/2)$. As a consequence, 
\begin{equation}\label{eq:sym_gj}
\partial_x^k g_j\left(t,y+\frac{1}{2}\right)=(-1)^k\partial_x^k g_j\left(t,-y+\frac{1}{2}\right).
\end{equation}
So the value of the function $x\mapsto |\partial_x^k g_j(t,x)|$ evolves symmetrically with respect to $x=1/2$. The next proposition gives some insight on the amplitude, understood as the $L^q$ norm of the derivatives of $g_j$ restricted to an interval whose left bound is $1/2$. It can be easily extended to integrals on an interval centred in $1/2$ thanks to equation~\eqref{eq:sym_gj}.

\begin{pro}\label{pro:average_deriv_gj}
For $k\geq 1$, $j\in\llbracket 1,2^L\rrbracket$, $q\geq 1$, $z\in[1/2,1)$, and $t\in\mathbb R$, the $L^q([1/2,z])$ norm of the function $g_j$, defined in equation~\eqref{eq:gj}, admits the following bound:
$$\left(\int_{1/2}^{z}{\left|\partial_x^k g_j(t,x)\right|^q dx}\right)^{1/q} \leq \sum_{l=1}^{k}\left|\frac{tp^L_j}{\ln(2)}\right|^{l} \mathcal L(k,l) \left(\frac{2}{(r(k-l)+1)r(k-l)}\right)^{l} \frac{(1-z)^{-lr(k-l)+1/q}}{(qlr(k-l)-1)^{1/q}},$$
where $\mathcal L(k,l)$ is a Lah number, equal to ${k-1 \choose l-1} \frac{k!}{l!}$, and $r:\mathbb N\rightarrow\mathbb N$ is defined by:
\begin{equation}\label{eq:rk}
r(k)=\max\left(5,2\left\lceil\frac{k-1}{2}\right\rceil+1\right).
\end{equation}
\end{pro}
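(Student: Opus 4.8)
The plan is to bound the $L^q$ norm by working directly from the exact derivative formula in Proposition~\ref{pro:deriv_gj}. First I would take the modulus of equation~\eqref{eq:deriv_gj}. Since $g_j(t,x)$ is a complex exponential of a real argument (recall the bracket in~\eqref{eq:gj} is real-valued for $x\in(0,1)$), we have $|g_j(t,x)|=1$, so the modulus of the $k$-th derivative is simply
$$\left|\partial_x^k g_j(t,x)\right| = \left|\sum_{l=1}^{k}\left(\frac{itp^L_j}{\ln(2)}\right)^l B_{k,l}\left(\lambda(x),\lambda'(x),\ldots,\lambda^{(k-l)}(x)\right)\right| \leq \sum_{l=1}^{k}\left|\frac{tp^L_j}{\ln(2)}\right|^l \left|B_{k,l}\left(\lambda(x),\ldots,\lambda^{(k-l)}(x)\right)\right|,$$
using the triangle inequality and $|i|=1$. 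Then I would apply the triangle (Minkowski) inequality in $L^q([1/2,z])$ to pull the sum over $l$ outside the norm, reducing the problem to bounding, for each $l$, the $L^q$ norm of the single Bell-polynomial term.

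Next I would control the Bell polynomial $B_{k,l}$ evaluated at the derivatives of $\lambda$. The function $\lambda(x)=\ln(x)-\ln(1-x)$ has the explicit derivatives $\lambda^{(m)}(x)=(-1)^{m-1}(m-1)!\,x^{-m}+(m-1)!\,(1-x)^{-m}$ for $m\geq 1$; on $[1/2,z]\subset[1/2,1)$ the singular behaviour is governed entirely by the $(1-x)^{-m}$ piece, while $\lambda$ itself (the $m=0$ slot, which is $\lambda(x)$ appearing as the first argument) stays bounded away from the singularity. The key idea is to bound each argument $\lambda^{(m)}(x)$ by a constant times $(1-x)^{-r(m)}$, where the exponent function $r$ defined in~\eqref{eq:rk} is engineered to dominate the true order $m$ of the singularity (the $\max(5,\cdot)$ and the odd-rounding absorb the first argument $\lambda(x)$ and the factorial prefactors uniformly). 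Substituting these bounds into the Bell polynomial and using its homogeneity — $B_{k,l}$ scaled arguments produce the factor $(1-x)^{-r(k-l)\cdot l}$, since the highest slot carries the worst exponent $r(k-l)$ and there are $l$ factors in each monomial — would yield a pointwise bound of the form $|B_{k,l}(\ldots)| \leq \mathcal L(k,l)\,\big(\tfrac{2}{(r(k-l)+1)r(k-l)}\big)^{l}(1-x)^{-l\,r(k-l)}$, where the Lah number $\mathcal L(k,l)=\binom{k-1}{l-1}\frac{k!}{l!}$ enters as the total combinatorial count of terms (or a closed-form evaluation of $B_{k,l}$ at the chosen constant-ratio arguments) and the $\big(\tfrac{2}{(r+1)r}\big)^l$ factor tracks the normalising constants of the $\lambda$-derivative bounds.

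Finally I would compute the remaining scalar integral. With the pointwise bound above, the $L^q$ norm of the $l$-th term is $\mathcal L(k,l)\big(\tfrac{2}{(r(k-l)+1)r(k-l)}\big)^{l}$ times $\big(\int_{1/2}^{z}(1-x)^{-ql\,r(k-l)}\,dx\big)^{1/q}$. Evaluating the elementary integral $\int_{1/2}^z (1-x)^{-\beta}dx = \frac{(1-z)^{1-\beta}-(1/2)^{1-\beta}}{\beta-1}$ with $\beta=ql\,r(k-l)$, and discarding the negative $(1/2)^{1-\beta}$ contribution (which only decreases the integral since $\beta>1$), gives the bound $\frac{(1-z)^{-lr(k-l)+1/q}}{(qlr(k-l)-1)^{1/q}}$ after taking the $q$-th root, matching the stated expression. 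The main obstacle I anticipate is the second step: verifying that $r(k-l)$ as defined genuinely dominates every argument of $B_{k,l}$ simultaneously — including the bounded first argument $\lambda(x)$ and the interplay of factorials with the $(1-x)^{-m}$ powers — and that the homogeneous substitution into the Bell polynomial produces exactly the constant $\big(\tfrac{2}{(r(k-l)+1)r(k-l)}\big)^{l}$ together with the Lah number rather than a looser combinatorial factor. Establishing this clean correspondence, rather than the integration or the triangle inequalities, is where the technical care will be required.
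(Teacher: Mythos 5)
Your outline coincides with the paper's own proof: modulus one for $g_j$, triangle plus Minkowski to isolate each Bell-polynomial term, a pointwise bound of $|B_{k,l}|$ by $\mathcal L(k,l)\bigl(\tfrac{2}{(r(k-l)+1)r(k-l)}\bigr)^l(1-x)^{-lr(k-l)}$ using the identity $B_{k,l}(1!,2!,\dots)=\mathcal L(k,l)$, and the elementary integral of $(1-x)^{-qlr(k-l)}$ with the negative boundary term discarded. All of that is correct and matches the paper.

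The genuine gap is exactly the step you flag as your ``anticipated obstacle'' and then leave unproven: the claim that every argument of the Bell polynomial, i.e.\ every normalized derivative $u_m(x)=\lambda^{(m)}(x)/(m+1)!$ for $m\in\llbracket 0,k-l\rrbracket$, is dominated on $[1/2,1)$ by the single function $\tfrac{2}{(r+1)r}(1-x)^{-r}$ with $r=r(k-l)$. This is not a routine verification, because the $u_m$ are \emph{not} monotone in $m$: from $\lambda^{(m)}(x)=(m-1)!\bigl((-1)^{m+1}x^{-m}+(1-x)^{-m}\bigr)$ one gets $u_1(1/2)=2$ but $u_3(1/2)=4/3$, so the naive choice ``dominate slot $m$ by slot $k-l$'' fails outright near $x=1/2$; the even-order $u_m$ vanish at $1/2$ while the odd-order values $2^{m+1}/m(m+1)$ first decrease before increasing. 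This is precisely why $r$ carries the floor $\max(5,\cdot)$ and the odd rounding, and why the paper devotes a separate lemma to proving $0\leq u_{k'}(x)\leq u_{r(k)}(x)$ for all $k'\leq k$, via a power-series comparison at $x=1/2$ establishing $u_{k+2}\geq u_k$ for $k\geq 2$, $u_{k+4}\geq u_k$ for $k\geq 1$, $u_{2k}\leq u_{2k+1}$, and $u_0\leq u_2$ separately. (Your alternative phrasing --- bound each $\lambda^{(m)}$ by a constant times $(1-x)^{-r(m)}$ with $m$-dependent exponents --- does not escape this: you would still have to check that each constant is at most $\tfrac{2}{(r+1)r}$ after absorbing $(1-x)^{-(r-m)}\geq 2^{r-m}$, e.g.\ $15\leq 16$ for $m=1$, $r=5$, which is the same delicate comparison, and you would also need $u_m\geq 0$ to raise the bounds to the powers $j_m$.) Until this domination lemma is actually established, the central inequality of your second step is asserted rather than proven, and the proof is incomplete.
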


The proof of this proposition is postponed in Appendix~\ref{sec:proofaverage_deriv_gj}.

The following theorem provides an approximation for $\E[g_j(t, X/n_j)]$, where $X$ is a binomial variable of parameters $n_j$ and $1/2$, with an upper bound for the error. This approximation will then be useful for deriving the characteristic function of the market information, as defined in equation~\eqref{eq:caracgj}.

\begin{thm}\label{th:Taylor_gj}
For $j\in\llbracket 1,2^L\rrbracket$, $t\in\mathbb R$, $g_j$ defined in equation~\eqref{eq:gj}, and $X\sim\mathcal B(n_j,1/2)$, we have
$$\E\left[g_j\left(t,\frac{X}{n_j}\right)\right]=1+\frac{itp^L_j}{2\ln(2)n_j}+R(t,n_j),$$
where, for all $q\in\mathbb N\setminus\{0,1\}$, we have for all $\varepsilon>1$, the existence of $\nu\in\mathbb N$ such that, for all $n_j\geq\nu$,
\begin{equation}\label{eq:borneerreurgj}
|R(t,n_j)|\leq\frac{\varepsilon}{96} \left(q-1\right)^{1-1/q} (4q-1)^{3}
\left(\sum_{l=1}^{4}\left|\frac{2^5tp^L_j}{15\ln(2)}\right|^{l} \frac{\mathcal L(4,l)}{(5ql-1)^{1/q}}\right) n_j^{-2+1/2q}.
\end{equation}
\end{thm}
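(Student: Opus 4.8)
\section*{Proof proposal for Theorem~\ref{th:Taylor_gj}}

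The plan is to Taylor expand $x\mapsto g_j(t,x)$ about the mean $x=1/2$ of $\widehat\pi_j=X/n_j$ and to read off the first two terms from Proposition~\ref{pro:deriv_gj}. Since the bracketed quantity in equation~\eqref{eq:gj} vanishes at $x=1/2$, we have $g_j(t,1/2)=1$, and by the last assertion of Proposition~\ref{pro:deriv_gj} every odd derivative vanishes at $1/2$. Hence the third-order Taylor polynomial of $g_j(t,\cdot)$ at $1/2$ reduces to $1+\tfrac12\partial_x^2 g_j(t,1/2)(x-1/2)^2$. Taking $x=\widehat\pi_j$ and expectations, using $\E[\widehat\pi_j-1/2]=0$ and $\E[(\widehat\pi_j-1/2)^2]=\Var(\widehat\pi_j)=1/(4n_j)$, together with the value $\partial_x^2 g_j(t,1/2)=4itp^L_j/\ln(2)$ obtained from equation~\eqref{eq:deriv_gj_1/2} with $k=2$ (there $B_{2,1}(0,0!)=1$ and $B_{2,2}(0)=0$), yields the announced main term $itp^L_j/(2\ln(2)n_j)$, so that $R(t,n_j)$ is the expectation of the order-$4$ Taylor remainder.

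Next I would write this remainder in integral form,
\[
R(t,n_j)=\frac{1}{6}\,\E\left[\int_{1/2}^{\widehat\pi_j}(\widehat\pi_j-u)^3\,\partial_x^4 g_j(t,u)\,du\right],
\]
the cubic term being absent because $\partial_x^3 g_j(t,1/2)=0$. For a fixed realisation $\widehat\pi_j=z\in(1/2,1)$ I would bound the inner integral by H\"older's inequality with conjugate exponents $q$ and $q'=q/(q-1)$: the factor $\|(z-\cdot)^3\|_{L^{q'}([1/2,z])}=(z-1/2)^{4-1/q}/(3q'+1)^{1/q'}$ is explicit, while $\|\partial_x^4 g_j(t,\cdot)\|_{L^q([1/2,z])}$ is controlled by Proposition~\ref{pro:average_deriv_gj} with $k=4$, for which $r(4-l)=5$ for every $l\in\{1,2,3,4\}$, so that the constant $(2/((r+1)r))^l$ becomes $15^{-l}$ and the singular factor is $(1-z)^{-5l+1/q}$. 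The symmetric case $z<1/2$ is handled identically via equation~\eqref{eq:sym_gj}.

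It then remains to integrate these bounds against the law of $\widehat\pi_j$, the crux being the estimate of $\E[(\widehat\pi_j-1/2)^{4-1/q}(1-\widehat\pi_j)^{-5l+1/q}\indic_{\{1/2<\widehat\pi_j<1\}}]$ for $l\in\{1,2,3,4\}$. On this event $1-\widehat\pi_j\geq 1/n_j$, and I would show that the dominant contribution comes from the bulk $\widehat\pi_j-1/2=O(n_j^{-1/2})$, where $(1-\widehat\pi_j)^{-5l+1/q}\to 2^{5l-1/q}$ -- this produces the factor $2^{5l}$, hence the combination $|2^5 tp^L_j/(15\ln(2))|^l$ -- while the binomial moment $\E[(\widehat\pi_j-1/2)^{4-1/q}]$ scales as $n_j^{-(4-1/q)/2}=n_j^{-2+1/(2q)}$, which fixes the rate. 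The two endpoints $\widehat\pi_j\in\{0,1\}$ must be treated apart, since there the $L^q$ control of $\partial_x^4 g_j$ degenerates: but $g_j$ extends continuously to $[0,1]$ with $|g_j|=1$, so its finite Taylor polynomial is bounded and the corresponding remainder is $O(1)$ while carrying probability $2^{-n_j}$, hence negligible. Collecting the Lah numbers $\mathcal L(4,l)$, the constant $(q-1)^{1-1/q}$ and the powers of $(4q-1)$ produced by $(3q'+1)^{-1/q'}$ and by the fractional moment of order $(4q-1)/q$, and the prefactor $1/6$, recombines into the stated bound; the constant $\varepsilon>1$ and the threshold $\nu$ serve to replace the exact binomial moment by its leading asymptotic equivalent for $n_j$ large enough.

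The main obstacle is precisely this last weighted-moment estimate. The weight $(1-\widehat\pi_j)^{-5l+1/q}$ diverges as $\widehat\pi_j\to1$, exactly where the bound of Proposition~\ref{pro:average_deriv_gj} blows up, yet the binomial mass there is exponentially small; making the competition between these two effects rigorous, and simultaneously extracting the sharp power $n_j^{-2+1/(2q)}$ with the correct constant, is the delicate part. Everything else -- the Taylor expansion, the evaluation of the low-order derivatives at $1/2$, and the H\"older step -- is routine once Propositions~\ref{pro:deriv_gj} and~\ref{pro:average_deriv_gj} are in hand.
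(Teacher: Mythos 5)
Your proposal follows the paper's own proof almost step for step through its first two thirds: the Taylor expansion of $g_j(t,\cdot)$ about $1/2$ with the odd derivatives vanishing, the evaluation $\partial_x^2 g_j(t,1/2)=4itp^L_j/\ln(2)$, the integral form of the order-4 remainder, and the H\"older step with exponents $q$ and $q/(q-1)$ combined with Proposition~\ref{pro:average_deriv_gj} (where indeed $r(4-l)=5$ for all $l$, producing the $15^{-l}$ and the singularity $(1-x)^{-5l+1/q}$). All of that is correct and is exactly what the paper does.

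The gap is the step you yourself flag as ``the delicate part'': you never actually prove the weighted-moment estimate $\E\bigl[(1-\widehat\pi_j)^{-5l+1/q}\,(\widehat\pi_j-1/2)^{4-1/q}\bigr]\lesssim 2^{5l}(4q-1)^{4-1/q}\,n_j^{-2+1/2q}$, you only assert that the bulk dominates and that the weight tends to $2^{5l-1/q}$ there. The paper's device for making this rigorous is to decouple the two competing factors by the Cauchy--Schwarz inequality, bounding the expectation $\xi$ of the product by $\E[(1-X/n_j)^{-10l+2/q}]^{1/2}\,\E[(X/n_j-1/2)^{8-2/q}]^{1/2}$, and then to invoke two separate results: an asymptotic formula for negative moments of a binomial variable, $\lim_{n_j\to\infty}\E[(1-X/n_j)^{-10l+2/q}]=2^{10l-2/q}$ (which is where the hypothesis $\varepsilon>1$ and the threshold $\nu$ enter), and, via Jensen's inequality (this is why $q$ must be an integer, a constraint your outline does not motivate), the central-moment bound $\E[(X-n_j/2)^{2d}]\leq d^{2d}(n_j/4)^d$ with $d=4q-1$, which yields the factor $(4q-1)^{8-2/q}(4n_j)^{-4+1/q}$. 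Without some such decoupling your heuristic does not close: the weight blows up exactly where the polynomial factor is largest in relative terms, and a naive bulk/tail split still requires a quantitative negative-moment bound. On the other hand, your observation that the atoms $\widehat\pi_j\in\{0,1\}$ must be excised (there $g_j$ extends continuously with modulus $1$, the remainder is $O(1)$, and the probability is $2^{-n_j}$) is a legitimate point that the paper glosses over when it applies the negative-moment result, so that part of your discussion is actually a small improvement in care rather than a defect.
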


The proof of this theorem is postponed in Appendix~\ref{sec:Taylor_gj}.

The upper bound of the error term in Theorem~\ref{th:Taylor_gj} depends on two free parameters, $\varepsilon$ and $q$. It is obvious that the lower possible value for $\varepsilon$, the tighter the bound. When $n_j\rightarrow\infty$, replacing $\varepsilon$ by 1 in equation~\eqref{eq:borneerreurgj} provides an asymptotic expression of the upper bound of $|R(t,n_j)|$. Regarding the selection of $q$, it is directly related to the convergence speed with respect to $n_j$ because of the term $n_j^{-2+1/2q}$. On the one hand, a higher $q$ increases the convergence rate of the error term with respect to $n_j$, on the other hand it also increases the constant term in the formula of the upper bound. Figure~\ref{fig:ErrorBound} displays a numerical evaluation of the asymptotic bound of Theorem~\ref{th:Taylor_gj} for several possible values for $q$. It finally suggests that the lower $q$, the lower the bound, regardless of $n_j$. With $q=2$, the error term is asymptotically $\mathcal O(n_j^{-1.75})$ which is to be compared to the $n_j^{-1}$ appearing in the approximation of $\E[g_j(t, X/n_j)]$ in Theorem~\ref{th:Taylor_gj}.

\begin{figure}[htbp]
  \centering
  \includegraphics[width=0.7\linewidth]{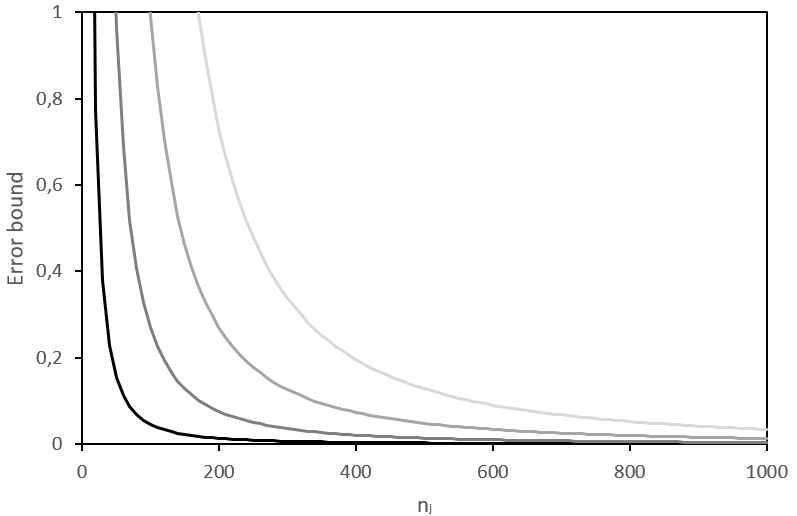}
  \begin{minipage}{0.7\textwidth}\caption{Assymptotic upper bound of $|R(t,n_j)|$ according to Theorem~\ref{th:Taylor_gj}, for several values of $q$, namely, from the bottom to the top: 2, 3, 4, 5. We have considered $p^L_j=1/2$ and $t=1$.}\label{fig:ErrorBound}
  \end{minipage}
\end{figure}

\subsubsection{Asymptotic distribution of the market information estimator}\label{sec:asympt}

We are now interested in the characteristic function of the empirical market information, conditionally on $\mathcal E$. According to equation~\eqref{eq:caracgj}, and using Theorem~\ref{th:Taylor_gj}, we have, for $t\in\mathbb R$ and $p^L_j=n_j/n>0$ whatever $j$,
\begin{equation}\label{eq:caracmarketinfo}
\begin{array}{ccl}
\varphi_{\widehat I^{L+1}}(t) & = & \prod_{j=1}^{2^L} \E\left[g_j(t,\widehat \pi_j)\right] \\
 & \overset{n\rightarrow\infty}{\sim} & \prod_{j=1}^{2^L} \left(1+\frac{it}{2\ln(2)n}\right) \\
  & \overset{n\rightarrow\infty}{\sim} & \left(1-\frac{it}{\ln(2)n}\right)^{-2^{L-1}}.
\end{array}
\end{equation}
We recognize the characteristic function of the gamma distribution $\Gamma(k,\theta)$ of shape parameter $k=2^{L-1}$ and scale parameter $\theta=1/\ln\left(2\right)n$. We note that the result does not depend on the specific value of each $p^L_j$, but only on $n$ and $L$.

We underline that a related literature on transfer entropy concludes that the information, defined as a difference of conditional entropies, follows asymptotically a chi-square distribution, using Wilks' theorem~\cite{BB,KS,BC}. While the framework and definition of information are slightly different from ours, we note that we would also obtain such an asymptotic chi-square distribution if we changed the scale of our market information, in particular by replacing the base 2 of the logarithm by a base $e$ in equations~\eqref{eq:entropyPrefSuff} and~\eqref{eq:Hstar}. Morover, beyond the asymptotic distribution, we have also contributed to give some insight into the way the error of the approximation behaves, thanks to Theorem~\ref{th:Taylor_gj}.

Figures~\ref{fig:distrib_100} and~\ref{fig:distrib_I3} confirm with the help of simulations the relevance of the gamma distribution, with the parameters $k=2^{L-1}$ and $\theta=1/\ln\left(2\right)n$, as the asymptotic distribution of the empirical market information under the hypothesis of an efficient market, for two values of $L$. However, this confirmation is only visual. In order to study more thoroughly the accuracy of our gamma approximation, we conduct a statistical test to assess whether the very slight difference one can see between the simulated and asymptotic cumulative distribution functions in Figures~\ref{fig:distrib_100} and~\ref{fig:distrib_I3} is significant or not, depending on the value of $n$. We answer this question with a Kolmogorov-Smirnov test, whose conclusion, for $L=1$, is that the asymptotic distribution is valid for $n\gtrsim 100$, as displayed in Figure~\ref{fig:KS_stat}.

\begin{figure}[htbp]
	\centering
		\includegraphics[width=0.45\textwidth]{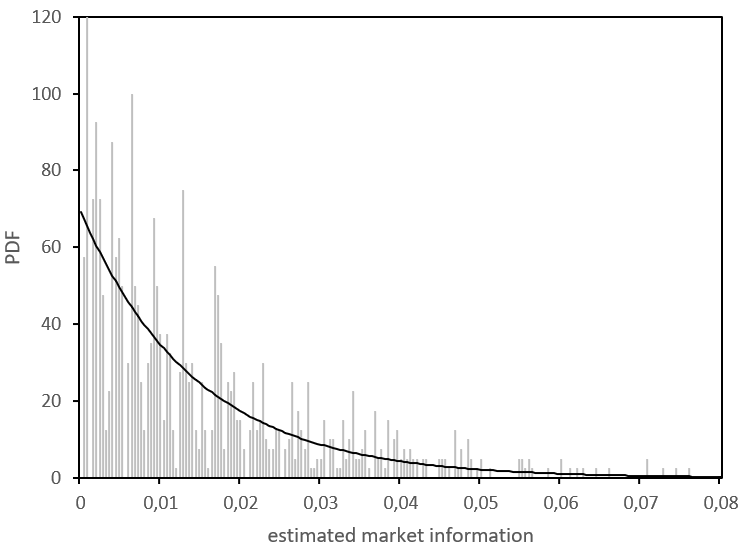} 
		\includegraphics[width=0.45\textwidth]{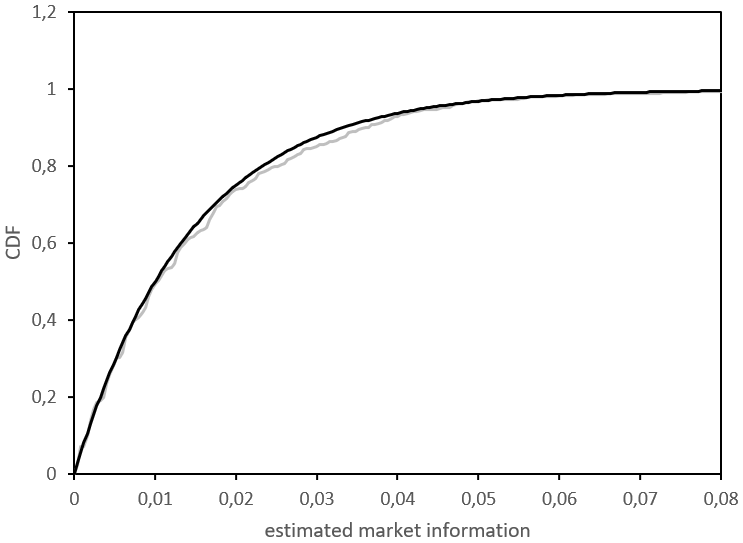} 
\begin{minipage}{0.9\textwidth}\caption{Asymptotic $\Gamma(2^{L-1},1/\ln\left(2\right)n)$ (black) and simulated (on 1,000 trajectories, grey) distributions of the estimated market information $\widehat{I}^2$ for $n=100$.}
	\label{fig:distrib_100}
\end{minipage}
\end{figure}

\begin{figure}[htbp]
	\centering
		\includegraphics[width=0.45\textwidth]{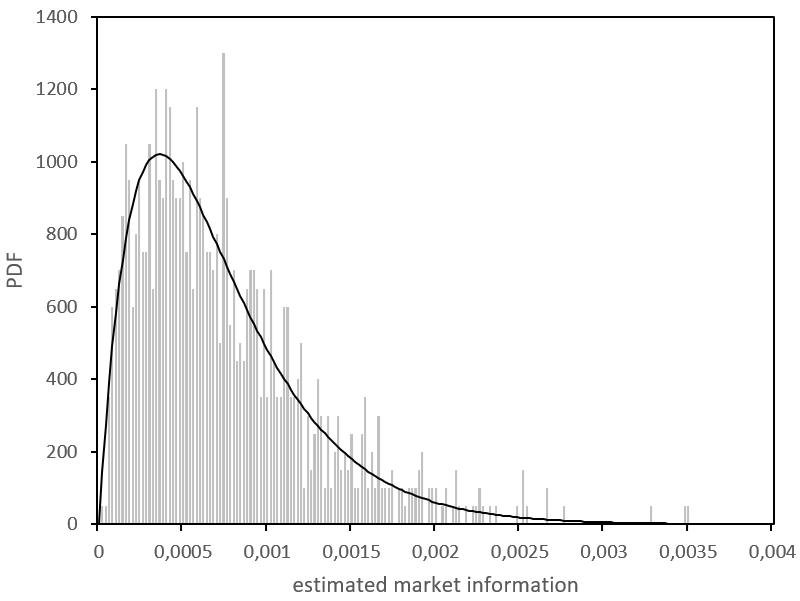} 
		\includegraphics[width=0.45\textwidth]{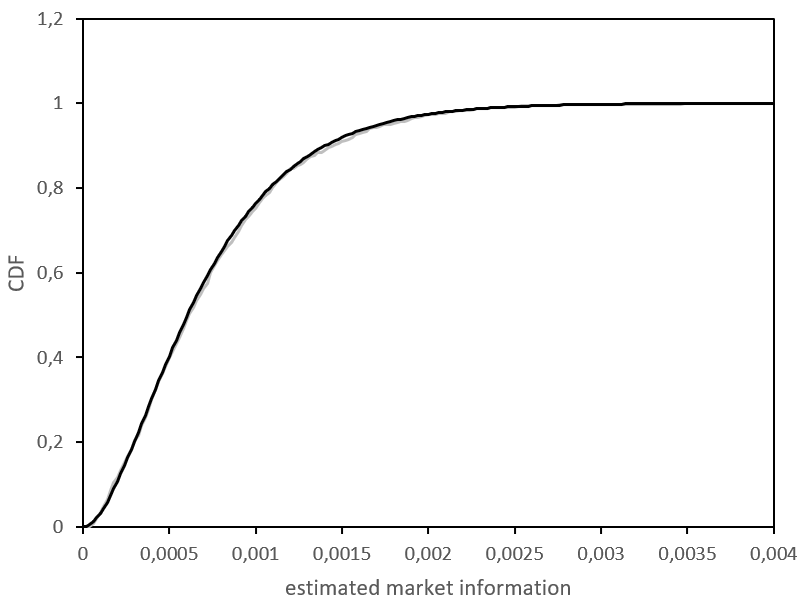} 
\begin{minipage}{0.9\textwidth}\caption{Asymptotic $\Gamma(2^{L-1},1/\ln\left(2\right)n)$ (black) and simulated (on 1,000 trajectories, grey) distributions of the estimated market information $\widehat{I}^3$ for $n=4,000$.}
	\label{fig:distrib_I3}
\end{minipage}
\end{figure}

\begin{figure}[htbp]
	\centering
		\includegraphics[width=0.65\textwidth]{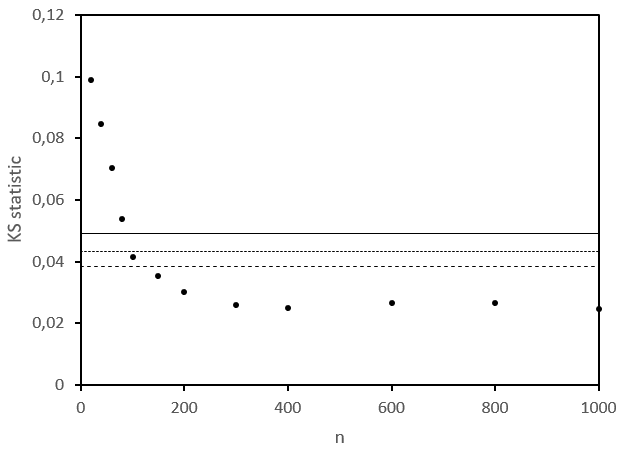} 
\begin{minipage}{0.9\textwidth}\caption{Kolmogorov-Smirnov statistic between the asymptotic and the simulated (on 1,000 trajectories) distributions of the estimated market information $\widehat{I}^2$ for various values of $n$. The three lines indicate the values of the statistics leading to a p-value of $5\%$, $1\%$, and $0.1\%$ (from bottom to top).}
	\label{fig:KS_stat}
\end{minipage}
\end{figure}

\subsection{Statistical test of market efficiency}\label{sec:test}

The purpose of our work is to statistically test the existence of market efficiency using the empirical market information as a statistic. More precisely, given $L\in\mathbb N$, we define the null hypothesis by $\forall i\in\llbracket 1,2^L\rrbracket\setminus\left\{j| p^L_j=0\right\}, \pi^{L}_i=1/2$. The alternative hypothesis thus corresponds to the existence of an $i$ such that $\pi^{L}_i\neq 1/2$. One could indeed use this imbalance to do statistical arbitrages. As already exposed, the true definition of market efficiency is more restrictive than the formalism we chose in our null hypothesis. Nevertheless, we think that this approach is interesting insofar as rejecting our null hypothesis for a given confidence leads to rejecting the market efficiency with an even higher confidence.

Thanks to Theorem~\ref{th:positive}, the null hypothesis is equivalent to having the market information $I^{L+1}$ equal to zero. We evaluate this quantity with the statistic $\widehat{I}^{L+1}$. Unfortunately, under the null hypothesis, there is a bias in $\widehat{I}^{L+1}$, which asymptotically disappears according to equation~\eqref{eq:caracmarketinfo}. We have also provided in Section~\ref{sec:asympt} an asymptotic gamma distribution for $\widehat{I}^{L+1}$ under the null hypothesis, which we can use to quantify the p-value corresponding to the obtained test statistic and determine whether we are able or not to reject the EMH. Figure~\ref{fig:Info_n} reports the link between the value of the test statistic, its p-value, and $n$. We note that an alternative to our approach has previously been published, in which the confidence intervals were estimated by simulations instead of our straightforward asymptotic distribution~\cite{SMM}.

\begin{figure}[htb]
	\centering
		\includegraphics[width=0.65\textwidth]{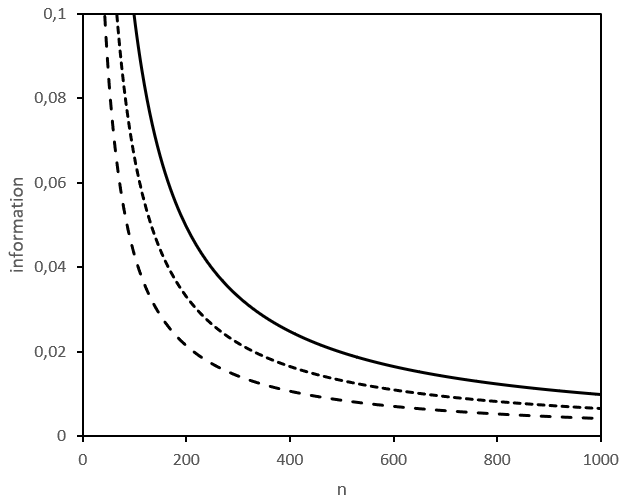} 
\begin{minipage}{0.9\textwidth}\caption{Information corresponding to a p-value of $5\%$, $1\%$, and $0.1\%$ (from bottom to top) in the test with a market information $I^2=0$ as a null hypothesis, for various values of $n$.}
	\label{fig:Info_n}
\end{minipage}
\end{figure}

\section{Empirical study}\label{sec:empirical}

We now apply the method introduced above to real financial data. We focus on one year of daily data, 2021, for four time series of prices: two stock indices, the CAC 40 index and Russell 2000 index, the stock of Perficient, which is a constituent of Russell 2000 index, and the cryptocurrency BTC/USD. 

Using the estimator $\widehat{I}^{L+1}$ introduced in Section~\ref{sec:estimator}, we evaluate the market information for each of these series, using the last $n$ observed daily price returns. We focus on a particular setting with $L=1$ and $n=100$ business days. We thus display the estimated market information between end May 2021 and end December 2021 in Figure~\ref{fig:Info_real}. Note that the BTC/USD has more trading days than the three other series, so that our series of market information in this case is longer and starts the 11th April 2021. 

\begin{figure}[htb]
	\centering
		\includegraphics[width=0.45\textwidth]{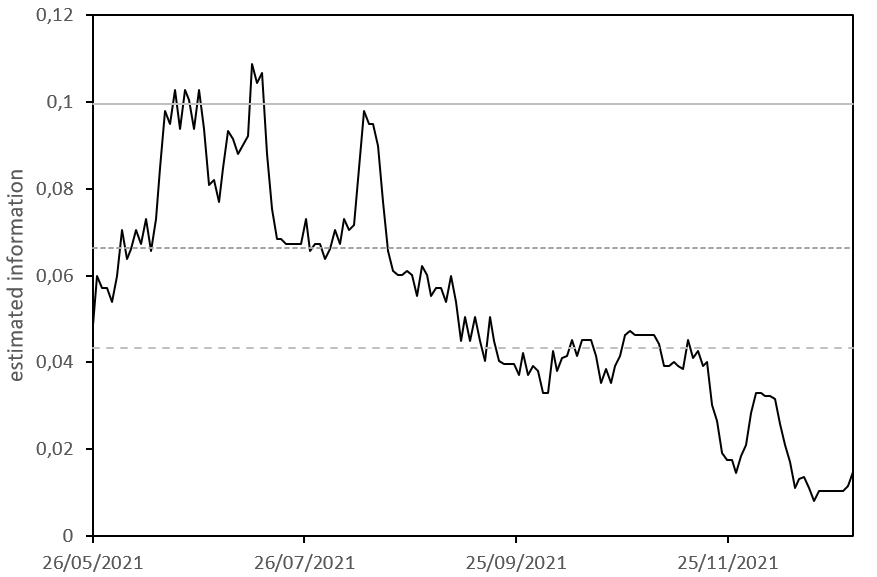}
		\includegraphics[width=0.45\textwidth]{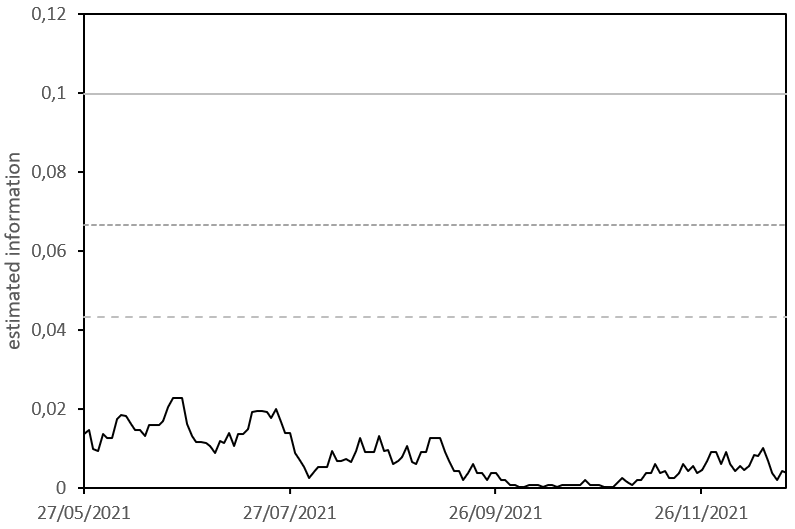} \\
		\includegraphics[width=0.45\textwidth]{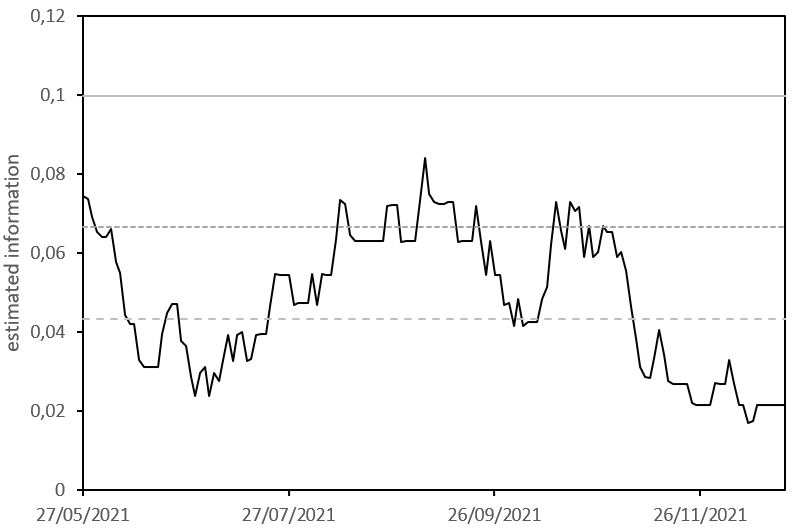} 
		\includegraphics[width=0.45\textwidth]{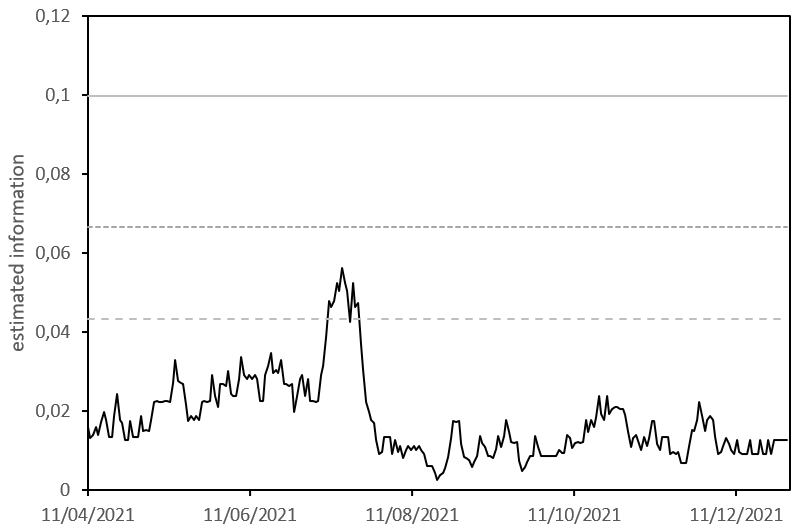} 
\begin{minipage}{0.9\textwidth}\caption{Estimated information $\widehat{I}^2$ in a rolling window of $n=100$ business days, for the CAC 40 index (top left), Russell 2000 index (top right), Perficient stock (bottom left), and BTC/USD (bottom right). The grey lines are the confidence intervals of the statistical test of absence of information, with probabilities of $95\%$, $99\%$, and $99.9\%$.}
	\label{fig:Info_real}
\end{minipage}
\end{figure}

We observe in Figure~\ref{fig:Info_real} some differences between the time series. For instance, the market information in our dataset is lower for Russell 2000 index and BTC/USD than for CAC 40 index and Perficient stock. But the market information is also time-varying and the rank of each of these four assets with respect to their information may change. It is for example the case for the CAC 40 index and Perficient stock: the French stock index had a higher information in June than Perficient stock while we observe the opposite late October.

Besides the gradual interpretation of the market information, we are interested in applying the statistical test of market efficiency exposed in Section~\ref{sec:test}. According to our results gathered in Figure~\ref{fig:Info_real}, we are never able to reject the null hypothesis of market efficiency for Russell 2000 index, whereas we reject it for the CAC 40 index with a confidence of $95\%$ (respectively $99\%$ and $99.9\%$) at $60.8\%$ (resp. $31.0\%$ and $4.4\%$) of the dates at which we have estimated the market information. Opportunities of statistical arbitrages thus seem more frequent for the CAC 40 index than for the Russell 2000 index. This difference between the two indices may be due to the way each index is built and particularly to their number of constituents, which is 50 times higher for Russell index. When focusing on a single constituent of Russell 2000 index, namely Perficient stock, we can reject the EMH $57.0\%$ (respectively $15.2\%$) of the time with a confidence of $95\%$ (resp. $99\%$). This shows that opportunities of statistical arbitrages may exist for the constituents of an index, whereas mixing many stocks in an index makes these opportunities disappear at the scale of the index.

Regarding the cryptocurrency market, which is known to follow very specific dynamics, most of the time we cannot reject the EMH. We can only reject it, with a confidence of $95\%$, $4.2\%$ of the time.

\section{Conclusion}\label{sec:conclusion}

Determining whether the EMH holds or not for a particular financial asset is of major importance in the asset management industry. It can indeed help market agents in their investment decisions. In this article, we have proposed a statistical test of market efficiency which may be used in this practical perspective. This statistical test is based on a market information estimator relying on Shannon entropy and a symbolic representation of a series of successive price returns. Applying this tool to real financial data shows a diversity among financial markets regarding their efficiency. Future research in this field could focus on extending the statistical test introduced above to other information statistics, such as the permutation entropy~\cite{BPompe}.

\bibliographystyle{plain}

\bibliography{biblioEff}

\appendix

\section{Proof of Theorem~\ref{th:positive}}\label{sec:proofpositive}

\begin{proof}
Let the function $f$ be defined by $f:(x,y)\in(0,1]\times(0,1)\mapsto xy\log_2(xy) + x(1-y)\log_2(x(1-y))-x\log_2(x/2)$. We extend $f$ by continuity for $y\in[0,1]$ noting that $f(x,0)=f(x,1)=x$. Obtaining some of its derivatives for $x\in(0,1]$ and $y\in(0,1)$ is straightforward:
\begin{equation}\label{eq:preuvepositive}
\left\{\begin{array}{ccl}
\frac{\partial f}{\partial x}(x,y) & = & y\log_2(xy) + (1-y)\log_2(x(1-y))-\log_2(x/2) \\
\frac{\partial^2 f}{\partial y\partial x}(x,y) & = &  \log_2\left(\frac{y}{1-y}\right).
\end{array}\right.
\end{equation}
From these expressions, we note that $f$, $\partial f/\partial x$, and $\partial^2 f/\partial y\partial x$ are all equal to zero when evaluated in $y=1/2$, regardless of the value of $x\in(0,1]$. Moreover, when $y>1/2$ (respectively $<1/2$), we have $y/(1-y)>1$ (resp. $<1$) and thus $\partial^2 f/\partial y\partial x$ is positive for $y\in(1/2,1)$ and negative for $y\in(0,1/2)$. Its primitive $y\mapsto\partial f/\partial x (x,y)$ is thus strictly decreasing in $(0,1/2)$ and strictly increasing in $(1/2,1)$, with a value of 0 in $1/2$. Consequently, $\partial f/\partial x$ is nonnegative for $y\in(0,1)$, and $1/2$ is the only value of $y$ for which it is equal to zero. From equation~\eqref{eq:preuvepositive}, we observe that, for all $(x,y)\in(0,1]\times(0,1)$,
$$f(x,y)=x\frac{\partial f}{\partial x}(x,y).$$
Using in addition the extension of $f$ to $y\in[0,1]$, we can thus conclude that $f$ is nonnegative for $y\in[0,1]$, and $1/2$ is the only value of $y$ for which it is equal to zero, regardless of $x\in(0,1]$. Going back to the market information formula, we note that
$$I^{L+1}=\sum_{i=1}^{2^{L}} f\left(p^{L}_i,\pi^{L}_i\right).$$
Each element of this sum is positive, as soon as $(p^{L}_i,\pi^{L}_i)\in(0,1]\times[0,1]$, with a value of zero if and only if $\pi^{L}_i=1/2$. If one probability $p^L_i$ is equal to zero, the $i$-th element of the sum is equal to zero, because of the convention used, namely $0\log_2(0)=0$. This leads to the conclusion of Theorem~\ref{th:positive}.
\end{proof}

\section{Proof of Proposition~\ref{pro:mgf_exact}}\label{sec:proofmgf_exact}

\begin{proof}
By definition
\begin{equation}\label{eq:MGF1}
M_{\widehat{I}^{L+1}}(t) = e^{tH^{L+1}_{\star}}\prod_{i=1}^{2^L}\E\left[\left.\exp\left(t\left[p^{L}_i\widehat{\pi}^{L}_i\log_2\left(p^{L}_i\widehat{\pi}^{L}_i\right) + p^{L}_i(1-\widehat{\pi}^{L}_i)\log_2\left(p^{L}_i(1-\widehat{\pi}^{L}_i)\right)\right]\right)\right|\pi^L_i=\frac{1}{2}\right].
\end{equation}
Under the condition $\pi^L_i=1/2$, the variable $n_i\widehat{\pi}^{L}_i$ follows a binomial distribution of parameters $n_i$ and $1/2$. We thus have $\proba(\widehat{\pi}^{L}_i=j/n_i)=\binom{n_i}{j}2^{-n_i}$ and:
\begin{equation}\label{eq:espMGF}
\begin{array}{cl}
 & \E\left[\left.\exp\left(t\left[p^{L}_i\widehat{\pi}^{L}_i\log_2\left(p^{L}_i\widehat{\pi}^{L}_i\right) + p^{L}_i(1-\widehat{\pi}^{L}_i)\log_2\left(p^{L}_i(1-\widehat{\pi}^{L}_i)\right)\right]\right)\right|\pi^L_i=\frac{1}{2}\right] \\
 = & \sum_{j=0}^{n_i} \exp\left(t\left[p^{L}_i\frac{j}{n_i}\log_2\left(p^{L}_i\frac{j}{n_i}\right) + p^{L}_i\left(1-\frac{j}{n_i}\right)\log_2\left(p^{L}_i\left(1-\frac{j}{n_i}\right)\right)\right]\right)\binom{n_i}{j}2^{-n_i} \\
 = & e^{tp^L_i\log_2\left(p^L_i\right)}\sum_{j=0}^{n_i} C^L_{i,j}(t).
 \end{array}
 \end{equation}
Combining this equation with equation~\eqref{eq:MGF1}, we get:
$$\begin{array}{ccl}
M_{\widehat{I}^{L+1}}(t) & = & e^{tH^{L+1}_{\star}}\prod_{i=1}^{2^L} e^{tp^L_i\log_2\left(p^L_i\right)} \sum_{j=0}^{n_i} C^L_{i,j}(t) \\
 & = & e^{tH^{L+1}_{\star}}e^{-tH^{L}}\prod_{i=1}^{2^L} \sum_{j=0}^{n_i} C^L_{i,j}(t) \\
  & = & e^{t}\prod_{i=1}^{2^L} \sum_{j=0}^{n_i} C^L_{i,j}(t),
\end{array}$$
because $H^{L+1}_{\star}=1+H^{L}$, according to equation~\eqref{eq:Hstar}.
\end{proof}

\section{Proof of Proposition~\ref{pro:moment}}\label{sec:proofmoment}

\begin{proof}
By definition, the $r$-th moment appearing in Proposition~\ref{pro:moment} is obtained by the $r$-th derivative of the moment-generating function, $M_{\widehat{I}^{L+1}}^{(r)}(0)$. By developing the product in the formula provided in Proposition~\ref{pro:mgf_exact}, we write
$$M_{\widehat{I}^{L+1}}(t)=e^{t}\sum_{j_1=0}^{n_1} \ldots\sum_{j_{2^L}=0}^{n_{2^L}}\prod_{i=1}^{2^L} C^L_{i,j_i}(t).$$
Then, Leibniz rule leads to
\begin{equation}\label{eq:proofmom1}
M_{\widehat{I}^{L+1}}^{(r)}(t)=\sum_{m=0}^r \binom{r}{m} e^{t}\sum_{j_1=0}^{n_1} \ldots\sum_{j_{2^L}=0}^{n_{2^L}}\left(\prod_{i=1}^{2^L} C^L_{i,j_i}(t)\right)^{(m)}.
\end{equation}
A straightforward calculation gives
\begin{equation}\label{eq:proofmom2}
\left(\prod_{i=1}^{2^L} C^L_{i,j_i}(t)\right)^{(m)}=\alpha_{j_1,...,j_{2^L}}^m\prod_{i=1}^{2^L} C^L_{i,j_i}(t).
\end{equation}
We also note that
\begin{equation}\label{eq:proofmom3}
C^L_{i,j_i}(0)=\binom{n_i}{j_i}\frac{1}{2^{n_i}}.
\end{equation}
Combining equations~\eqref{eq:proofmom1}, \eqref{eq:proofmom2}, and~\eqref{eq:proofmom3} together gives
$$\begin{array}{ccl}
M_{\widehat{I}^{L+1}}^{(r)}(0) & = & \sum_{m=0}^r \binom{r}{m} \sum_{j_1=0}^{n_1} \ldots\sum_{j_{2^L}=0}^{n_{2^L}}\alpha_{j_1,...,j_{2^L}}^m\prod_{i=1}^{2^L} C^L_{i,j_i}(0) \\
 & = & \sum_{m=0}^r \binom{r}{m} \sum_{j_1=0}^{n_1} \ldots\sum_{j_{2^L}=0}^{n_{2^L}}\alpha_{j_1,...,j_{2^L}}^m\prod_{i=1}^{2^L} \binom{n_i}{j_i}\frac{1}{2^{n_i}},
 \end{array}$$
which is the result stated in Proposition~\ref{pro:moment}.
\end{proof}

\section{Proof of Proposition~\ref{pro:deriv_gj}}\label{sec:proofderiv_gj}

\begin{proof}
Equation~\eqref{eq:deriv_gj} is a direct consequence of Faà di Bruno's formula to the function $x\mapsto \exp([itp^L_j/ln(2)]\Lambda(x))=g_j(x)$, where $\Lambda$ is a primitive of $\lambda$: $\Lambda(x)=x\ln(p^L_jx)+(1-x)\ln(p^L_j(1-x))-\ln(p^L_j/2)$. 

For proving equation~\eqref{eq:deriv_gj_1/2}, we simply observe by recurrence that, for $k\geq 1$:
\begin{equation}\label{eq:derlambda}
\frac{d^k}{dx^k}\lambda(x)=\frac{(k-1)!(-1)^{k+1}}{x^k}+\frac{(k-1)!}{(1-x)^k}.
\end{equation}
As a consequence, we have:
\begin{equation}\label{eq:recderlambda}
\frac{d^k}{dx^k}\lambda\left(\frac{1}{2}\right)=\left\{\begin{array}{ll}
0 & \text{if }k\in 2\mathbb N \\
(k-1)! 2^{k+1} & \text{if }k\in 2\mathbb N+1.
\end{array}\right.
\end{equation}
The Bell polynomial writes:
\begin{equation}\label{eq:Bell}
B_{k,l}\left(\lambda(x),\frac{d}{dx}\lambda(x),...\right) = \sum\frac{k!}{j_0!j_1!...j_{k-l}!}\prod_{m=0}^{k-l} \left(\frac{1}{(m+1)!}\frac{d^m}{dx^m}\lambda(x)\right)^{j_m},
\end{equation}
where the sum is taken over the set of $j_0,...,j_{k-l}$ submitted to the traditional Bell conditions, namely $\sum_m j_m=l$ and $\sum_m (m+1)j_m=k$. So, for $x=1/2$, we get:
$$\begin{array}{ccl}
B_{k,l}\left(\lambda\left(\frac{1}{2}\right),\frac{d}{dx}\lambda\left(\frac{1}{2}\right),...\right) & = & \sum\frac{k!}{j_0!j_1!...j_{k-l}!}\indic_{\{j_0=j_2=j_4=...=0\}}\prod_{m=1}^{k-l} \left(\frac{(m-1)!}{(m+1)!}2^{m+1}\right)^{j_m} \\
 & = & \sum\frac{k!}{j_0!j_1!...j_{k-l}!}\indic_{\{j_0=j_2=j_4=...=0\}}2^{\sum_m (m+1)j_m}\prod_{m=1}^{k-l} \left(\frac{(m-1)!}{(m+1)!}\right)^{j_m} \\
 & = & 2^kB_{k,l}(0,0!,0,2!,0,4!,0,...).
\end{array}$$
Combining this last equation and $g_j(t,1/2)=1$ in equation~\eqref{eq:deriv_gj}, we obtain equation~\eqref{eq:deriv_gj_1/2}. 

The reason why $\partial_x^k g_j(t,1/2)=0$ when $k$ is odd lies in the properties of Bell polynomial. Indeed, when we have $j_m=0$, for all $m$ even numbers, we have
$$\sum_{m=0}^{k-l} (m+1)j_m=\sum_{m=0}^{\lfloor\frac{k-l-1}{2}\rfloor} (2m+2)j_{2m+1}\in 2\mathbb N$$
and so this quantity cannot be equal to an odd $k$, as required by the second Bell condition. Therefore, at least one $j_m$ of even index $m$ must be different from 0 in each product of equation~\eqref{eq:Bell}. But, as the corresponding $m$-th derivative of $\lambda$ is equal to zero in $x=1/2$, the Bell polynomial (and the $k$-th derivative of $g_j$) is equal to zero.
\end{proof}

\section{Proof of Proposition~\ref{pro:average_deriv_gj}}\label{sec:proofaverage_deriv_gj}

In order to prove Proposition~\ref{pro:average_deriv_gj}, we first introduce and prove the following lemma.

\begin{lem}\label{lem:av1}
Let $x\in[1/2,1)$, $k\in\mathbb N$, the function $r$ be defined by equation~\eqref{eq:rk}, and $u_k(x)$ by $(1/(k+1)!)d^k\lambda(x)/dx^k$, where $\lambda$ is defined like in Proposition~\ref{pro:deriv_gj}. Then, for all $k'\in\llbracket 0,k\rrbracket$, we have $0\leq\min(u_0(x),u_1(x))\leq u_{k'}(x)\leq u_{r(k)}(x)$.
\end{lem}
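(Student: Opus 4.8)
The plan is to reduce everything to explicit expressions in the two quantities $a=1/x$ and $b=1/(1-x)$, which for $x\in[1/2,1)$ satisfy $1<a\le 2\le b$ together with the constraint $1/a+1/b=1$. From equation~\eqref{eq:derlambda} one reads off, for $k\ge 1$, $u_k(x)=\frac{1}{k(k+1)}\left((-1)^{k+1}a^k+b^k\right)$, while $u_0(x)=\lambda(x)=\ln(b/a)$. Since $b\ge a$, every $u_{k'}$ is nonnegative on $[1/2,1)$ (trivially for odd $k'$; from $b^{k'}\ge a^{k'}$ for even $k'\ge 2$ and $k'=0$), which already gives the leftmost inequality $0\le\min(u_0,u_1)$. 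I also record that $r(k)$ defined by equation~\eqref{eq:rk} is the smallest \emph{odd} integer that is at least $\max(5,k)$; the oddness matters because odd indices carry both terms with a plus sign and are therefore the largest.

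The engine of the proof is a single ``two-step ratio'' estimate: for every $k\ge 1$, $\dfrac{(-1)^{k+3}a^{k+2}+b^{k+2}}{(-1)^{k+1}a^k+b^k}\ge 4$. In the even case this is $b^k(b^2-4)\ge a^k(a^2-4)$, which holds because the left side is nonnegative and the right side nonpositive; in the odd case it is $a^k(a^2-4)+b^k(b^2-4)\ge 0$, which follows by multiplying the two nonnegative inequalities $b^k\ge a^k$ and $b^2-4\ge 4-a^2$, the latter being the key fact $a^2+b^2\ge 8$. That fact I would prove as a one-variable statement: under $a=b/(b-1)$, the function $b\mapsto b^2/(b-1)^2+b^2$ is increasing on $[2,\infty)$, so its minimum $8$ is attained at $b=2$ (i.e.\ $x=1/2$). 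Feeding the factor $4$ into $u_{k+2}/u_k=\frac{k(k+1)}{(k+2)(k+3)}\cdot(\ge 4)$ gives $u_{k+2}\ge u_k$ as soon as $4k(k+1)\ge(k+2)(k+3)$, i.e.\ $3k^2-k-6\ge 0$, which holds for $k\ge 2$. Hence each same-parity subsequence is increasing from index $2$ on: $u_2\le u_4\le\cdots$ and $u_3\le u_5\le\cdots$. A companion cross-parity estimate $u_{2m}\le u_{2m+1}$ for $m\ge 1$ follows from $2m\,b^{2m+1}\ge 4m\,(b^{2m}-a^{2m})\ge(2m+2)(b^{2m}-a^{2m})$, using $b\ge 2$.

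With these monotonicities the two bounds assemble by chaining, leaving only a finite set of explicit base cases. For the \textbf{upper bound}, every even index routes as $u_{k'}\le u_{k'+1}\le u_{r(k)}$ (next odd, then odd-monotonicity up to $r(k)\ge k+1$), every odd index $\ge 3$ satisfies $u_{k'}\le u_{r(k)}$ directly, and the only gap is $u_1$, handled via $u_1\le u_5\le u_{r(k)}$. For the \textbf{lower bound}, I first note $u_0\le u_1$ so that $\min(u_0,u_1)=u_0$; then even indices give $u_{k'}\ge u_2\ge u_0$ and odd indices $\ge 3$ give $u_{k'}\ge u_3\ge u_0$, while $u_1\ge u_0$ closes the remaining case. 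Thus the whole lemma rests on the four inequalities $u_0\le u_1$, $u_0\le u_2$, $u_0\le u_3$, and $u_1\le u_5$, each a one-variable inequality in $x\in[1/2,1)$ whose binding configuration is $x=1/2$ (equivalently $a=b=2$), to be verified by monotonicity in $b$ exactly as for $a^2+b^2\ge 8$.

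The main obstacle is precisely this cluster of base cases, and within it $u_1\le u_5$, i.e.\ $a^5+b^5\ge 15(a+b)$ under $1/a+1/b=1$: at $x=1/2$ it reads $64\ge 60$, so it is genuinely tight and uses the constraint rather than $a\le 2\le b$ alone. The difficulty has a structural source worth emphasizing: the combinatorial prefactor $1/(k(k+1))$ decreases while $b^k$ grows, so the comparison inequalities only become automatic once the geometric growth dominates; this is exactly why the odd subsequence dips at index $3$ before increasing, and why $r(k)$ must be clamped at $5$ — $u_3$ is too small to dominate $u_1$, but $u_5$ recovers above it. Establishing the handful of low-index inequalities by direct calculus, with $x=1/2$ as the extremal point, is therefore the crux, everything else being the mechanical chaining described above.
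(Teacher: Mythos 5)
Your proof is correct, but it takes a genuinely different route from the paper's. The paper obtains the same-parity monotonicity $u_{k+2}\geq u_k$ (for $k\geq 2$) by expanding $u_k$ as a power series at $x=1/2$, whose coefficients are all nonnegative, and comparing term by term via the recursion $\lambda^{(k+2)}(1/2)=4k(k+1)\lambda^{(k)}(1/2)$; the same device gives it $u_{k+4}\geq u_k$ for $k\geq 1$ essentially for free, which is how it disposes of the awkward index $1$. You instead work with the closed forms in $a=1/x$, $b=1/(1-x)$ and prove the pointwise bound $(-1)^{k+3}a^{k+2}+b^{k+2}\geq 4\left((-1)^{k+1}a^k+b^k\right)$ from $a\leq 2\leq b$ and $a^2+b^2\geq 8$; this is more elementary and makes the origin of the factor $4$ (hence of the threshold $3k^2-k-6\geq 0$, i.e.\ $k\geq 2$) transparent, but it costs you the $k+4$ step, so you must verify $u_1\leq u_5$, i.e.\ $a^5+b^5\geq 15(a+b)$ under $1/a+1/b=1$, as an extra base case. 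That inequality does go through by your proposed method: writing $a=b/(b-1)$, the derivative of $b\mapsto a^5+b^5-15(a+b)$ equals $(15-5a^4)/(b-1)^2+5(b^4-3)\geq -65+65=0$ on $[2,\infty)$, and the value at $b=2$ is $4>0$; the checks of $u_0\leq u_1$, $u_0\leq u_2$ (the only case with equality at $x=1/2$), and $u_0\leq u_3$ are similar and match the paper's calculus step for $u_0\leq u_2$. Your cross-parity estimate $u_{2m}\leq u_{2m+1}$ is the same as the paper's. Two cosmetic points: state the two-step comparison in product form (as you do in the even case) rather than as a ratio, since $u_k=0$ at $x=1/2$ for even $k$; and the parenthetical ``$r(k)\geq k+1$'' fails for odd $k\geq 5$, though your routing still closes there because an even $k'\leq k$ then satisfies $k'+1\leq k=r(k)$.
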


\begin{proof}
The power series expansion of $x\mapsto u_k(x)$ leads to:
$$u_{k}(x)=\sum_{j=0}^{\infty}\frac{1}{j!(k+1)!}\frac{d^{k+j}\lambda(1/2)}{dx^{k+j}}\left(x-\frac{1}{2}\right)^j.$$
Noting thanks to equation~\eqref{eq:recderlambda} that $d^{k+2}\lambda(1/2)/dx^{k+2}=4k(k+1)d^{k}\lambda(1/2)/dx^{k}$, we also have:
$$\begin{array}{ccl}
u_{k+2}(x) & = & \sum_{j=0}^{\infty}\frac{1}{j!(k+3)!}\frac{d^{k+2+j}\lambda(1/2)}{dx^{k+2+j}}\left(x-\frac{1}{2}\right)^j \\
 & = & \sum_{j=0}^{\infty}\frac{4(k+j)(k+j+1)}{(k+2)(k+3)}\frac{1}{j!(k+1)!}\frac{d^{k+j}\lambda(1/2)}{dx^{k+j}}\left(x-\frac{1}{2}\right)^j.
\end{array}$$
Let $a_{k,j}=4(k+j)(k+j+1)/(k+2)(k+3)$. When $j\geq 2$, $a_{k,j}$ is trivially larger than 1. When $j=1$, we have $a_{k,j}=(2k+2)(2k+4)/(k+2)(k+3)>1$. Finally, when $j=0$, solving a simple binomial equation in $k$, we obtain that the only $k\in\mathbb N\setminus\{0\}$ such that $a_{k,0}<1$ is $k=1$. So, we have
\begin{equation}\label{eq:av_forall1}
\forall k\geq 2, u_{k+2}(x)\geq u_{k}(x).
\end{equation}
With a similar reasoning and solving a new binomial equation in $k$, we show that $\forall k\in\mathbb N\setminus\{0\}$, $16k(k+1)/(k+5)(k+4)>0$ and finally that 
\begin{equation}\label{eq:av_forall2}
\forall k\geq 1, u_{k+4}(x)\geq u_{k}(x).
\end{equation}
Moreover, if $k>0$ is an even number, we have from equation~\eqref{eq:derlambda} that $u_k(x)\leq 1/k(k+1)(1-x)^k$. The same equation also leads to $u_{k+1}(x)\geq 1/(k+1)(k+2)(1-x)^{k+1}\geq 2/(k+1)(k+2)(1-x)^{k}$. Noting that for $k\geq 2$, we have $2/(k+2)\geq 1/k$, this simply proves that
\begin{equation}\label{eq:av_forall3}
\forall k\geq 1, u_{2k}(x)\leq u_{2k+1}(x).
\end{equation}
Last, noting that $u_0(1/2)=0=u_2(1/2)$ and that
$$\frac{d}{dx}(u_0(x)-u_2(x))=\frac{1}{x}\left(1-\frac{1}{3x^2}\right)+\frac{1}{1-x}\left(1-\frac{1}{3(1-x)^2}\right) \leq \frac{1}{x}\frac{2}{3}-\frac{1}{1-x}\frac{4}{3} \leq -\frac{4}{3},$$
we get $u_0(x)\leq u_2(x)$. Combining this last equation with equations~\eqref{eq:av_forall1}, \eqref{eq:av_forall2}, and~\eqref{eq:av_forall3}, and noting that for $k\geq 5$, $r(k)=k$ (respectively $r(k)=k+1$) when $k$ is odd (resp. even), we prove Lemma~\ref{lem:av1}.
\end{proof}

We can now prove Proposition~\ref{pro:average_deriv_gj}.

\begin{proof}
Combining equation~\eqref{eq:deriv_gj} and Minkowski inequality, we get
\begin{equation}\label{eq:proofboundderiv1}
\left(\int_{1/2}^{z}{\left|\partial_x^k g_j(t,x)\right|^qdx}\right)^{1/q} \leq \sum_{l=1}^{k}\left(\int_{1/2}^{z}{\left|\frac{tp^L_j}{\ln(2)}\right|^{ql} \left|B_{k,l}\left(\lambda(x),\frac{d}{dx}\lambda(x),...,\frac{d^{k-l}}{dx^{k-l}}\lambda(x)\right)\right|^q dx}\right)^{1/q}.
\end{equation}
Using Lemma~\ref{lem:av1}, we can bound the Bell polynomial, in which we recall that $\sum_m j_m=l$ and $\sum_m (m+1)j_m=k$:
$$\begin{array}{ccl}
\left|B_{k,l}\left(\lambda(x),\frac{d}{dx}\lambda(x),...,\frac{d^{k-l}}{dx^{k-l}}\lambda(x)\right)\right| & = & \sum\frac{k!}{j_0!j_1!...j_{k-l}!}\prod_{m=0}^{k-l} \left(\frac{1}{(m+1)!}\frac{d^m}{dx^m}\lambda(x)\right)^{j_m} \\
 & \leq & \sum\frac{k!}{j_0!j_1!...j_{k-l}!} \left(\frac{1}{(r(k-l)+1)!}\frac{d^{r(k-l)}}{dx^{r(k-l)}}\lambda(x)\right)^{\sum_m j_m} \\
 & = & B_{k,l}\left(1!,2!,...,(k-l+1)!\right) \left(\frac{(r(k-l)-1)!}{(r(k-l)+1)!}\left(\frac{1}{x^{r(k-l)}}+\frac{1}{(1-x)^{r(k-l)}}\right)\right)^{l} \\
 & \leq & \mathcal L(k,l) \left(\frac{2}{(r(k-l)+1)r(k-l)}\right)^{l} (1-x)^{-lr(k-l)},
\end{array}$$
because $B_{k,l}\left(1!,2!,...,(k-l+1)!\right)=\mathcal L(k,l)$. We then use this bound in equation~\eqref{eq:proofboundderiv1}:
$$\begin{array}{ccl}
\left(\int_{1/2}^{z}{\left|\partial_x^k g_j(t,x)\right|^qdx}\right)^{1/q} & \leq & \sum_{l=1}^{k}\left(\left|\frac{tp^L_j}{\ln(2)}\right|^{ql} \mathcal L(k,l)^q \left(\frac{2}{(r(k-l)+1)r(k-l)}\right)^{ql} \int_{1/2}^{z}{(1-x)^{-qlr(k-l)}dx}\right)^{1/q} \\
 & = & \sum_{l=1}^{k}\left|\frac{tp^L_j}{\ln(2)}\right|^{l} \mathcal L(k,l) \left(\frac{2}{(r(k-l)+1)r(k-l)}\right)^{l} \left(\frac{(1-z)^{-qlr(k-l)+1}-2^{qlr(k-l)-1}}{qlr(k-l)-1}\right)^{1/q} \\
 & \leq & \sum_{l=1}^{k}\left|\frac{tp^L_j}{\ln(2)}\right|^{l} \mathcal L(k,l) \left(\frac{2}{(r(k-l)+1)r(k-l)}\right)^{l} \frac{(1-z)^{-lr(k-l)+1/q}}{(qlr(k-l)-1)^{1/q}}.
\end{array}$$
This concludes the proof.
\end{proof}

\section{Proof of Theorem~\ref{th:Taylor_gj}}\label{sec:Taylor_gj}

\begin{proof}
Noting that the derivatives are provided by Proposition~\ref{pro:deriv_gj} and more precisely by equation~\eqref{eq:deriv_gj_1/2} when $x=1/2$, we can do a Taylor expansion of $g_j$ around $x=1/2$: 
\begin{equation}\label{eq:preuveTaylor_gj}
g_j(t,x)=1+\frac{2itp^L_j}{\ln(2)}\left(x-\frac{1}{2}\right)^2+\mathcal R_j(t,x),
\end{equation}
where
$$\mathcal R_j(t,x) = \frac{1}{6}\int_{1/2}^x \partial_x^4g_j(t,y)(x-y)^3 dy.$$
Let $q>1$. According to Hölder's inequality, we have:
$$\begin{array}{ccl}
|\mathcal R_j(t,x)| & \leq & \frac{1}{6}\int_{1/2}^x \left|\partial_x^4g_j(t,y)(x-y)^3\right| dy \\
 & \leq & \frac{1}{6} \left(\int_{1/2}^x \left|\partial_x^4g_j(t,y)\right|^qdy\right)^{1/q} \left(\int_{1/2}^x \left|(x-y)^3\right|^{q/(q-1)}dy\right)^{1-1/q}.
\end{array}$$
The second integral is easily calculated and its value is $((q-1)/(4q-1))(x-1/2)^{(4q-1)/(q-1)}$. For the first integral, we use Proposition~\ref{pro:average_deriv_gj}, noting that the $r(k-l)$ are all equal to 5 since $k=4$. Therefore,
$$|\mathcal R_j(t,x)| \leq \frac{1}{6} \left(\sum_{l=1}^{4}\left|\frac{tp^L_j}{15\ln(2)}\right|^{l} \mathcal L(4,l) \frac{(1-x)^{-5l+1/q}}{(5ql-1)^{1/q}}\right)\left(\frac{q-1}{4q-1}\right)^{1-1/q}\left(x-\frac{1}{2}\right)^{4-1/q}.$$
Then, replacing $x$ by $X/n_j$ leads to
$$\E\left[\left|\mathcal R_j\left(t,\frac{X}{n_j}\right)\right|\right]  \leq  \frac{1}{6} \left(\frac{q-1}{4q-1}\right)^{1-1/q} \sum_{l=1}^{4}\left|\frac{tp^L_j}{15\ln(2)}\right|^{l} \mathcal L(4,l) \E\left[\frac{(1-X/n_j)^{-5l+1/q}}{(5ql-1)^{1/q}} \left(\frac{X}{n_j}-\frac{1}{2}\right)^{4-1/q}\right],$$
in which we can transform the expectation, which we note $\xi$, in the right-hand side of the inequality, thanks to the Cauchy-Schwarz inequality:
$$\xi^2 \leq \E\left[\frac{(1-X/n_j)^{-10l+2/q}}{(5ql-1)^{2/q}}\right] \E\left[\left(\frac{X}{n_j}-\frac{1}{2}\right)^{8-2/q}\right].$$
Noting that $1-X/n_j$ is distributed like $X/n_j$, we can apply an asymptotic result on the negative moments of a positive binomial variable~\cite[Corollary~3.4]{LG}: $\lim_{n_j\rightarrow\infty}\E[(1-X/n_j)^{-10l+2/q}]=2^{10l-2/q}$. Regarding the second equation, we first use Jensen's inequality, in which we impose to have $q\in\mathbb N$, and we conclude by noting that, for $d\in\mathbb N$, $\E[(X-n_j/2)^{2d}]\leq d^{2d}(n_j/4)^d$~\cite{Skorski}:
$$\begin{array}{ccl}
\E\left[\left(\frac{X}{n_j}-\frac{1}{2}\right)^{8-2/q}\right] & \leq & \left(\E\left[\left(\frac{X}{n_j}-\frac{1}{2}\right)^{8q-2}\right]\right)^{1/q} \\
 & \leq & (4q-1)^{8-2/q}(4n_j)^{-4+1/q}. 
\end{array}$$
As a consequence, for all $\varepsilon>1$, there exists $\nu\in\mathbb N$ such that for all $n_j\geq\nu$ and all $q$ and $l$, we have
$$\xi \leq \varepsilon \frac{2^{5l-4}(4q-1)^{4-1/q}}{(5ql-1)^{1/q}}n_j^{-2+1/2q}$$
and finally
\begin{equation}\label{eq:preuveTaylorReste}
\E\left[\left|\mathcal R_j\left(t,\frac{X}{n_j}\right)\right|\right]  \leq  \frac{\varepsilon}{96} \left(q-1\right)^{1-1/q} (4q-1)^{3}
\left(\sum_{l=1}^{4}\left|\frac{2^5tp^L_j}{15\ln(2)}\right|^{l} \frac{\mathcal L(4,l)}{(5ql-1)^{1/q}}\right) n_j^{-2+1/2q}.
\end{equation}
Going back to the Taylor expansion in equation~\eqref{eq:preuveTaylor_gj}, and knowing that the variance of the binomial variable $X$ is $n_j/4$, we then have:
$$\begin{array}{ccl}
\E\left[g_j\left(t,\frac{X}{n_j}\right)\right] & = & 1+\frac{2itp^L_j}{\ln(2)}\E\left[\left(\frac{X}{n_j}-\frac{1}{2}\right)^2\right]+\E\left[\mathcal R_j\left(t,\frac{X}{n_j}\right)\right] \\
 & = & 1+\frac{itp^L_j}{2\ln(2)n_j}+\E\left[\mathcal R_j\left(t,\frac{X}{n_j}\right)\right].
\end{array}$$
We define $R(t,n_j)=\E\left[\mathcal R_j\left(t,X/n_j\right)\right]$ and Jensen's inequality provides us with $|R(t,n_j)|\leq\E\left[\left|\mathcal R_j\left(t,X/n_j\right)\right|\right]$, for which we know an upper bound thanks to equation~\eqref{eq:preuveTaylorReste}. This concludes the proof.
\end{proof}

\end{document}